\theoremstyle{plain}
\newtheorem{thm}{Theorem}[section]
\newtheorem{lem}{Lemma}[section]
\theoremstyle{definition}
\newtheorem{prop}{Proposition}[section]
\theoremstyle{remark}
\begin{document}

\title{ Coadjoint orbits and K\"ahler structure: \\examples from coherent states }

\maketitle

\begin{center}
\author{Rukmini Dey \\
I.C.T.S.-T.I.F.R.,   S\lowercase{ivakote}, H\lowercase{esaraghatta} H\lowercase{obli}  \\
 B\lowercase{angalore} 560089, I\lowercase{ndia}\\
\lowercase{rukmini@icts.res.in}\\
Joseph Samuel \\
I.C.T.S.-T.I.F.R.,  S\lowercase{ivakote}, H\lowercase{esaraghatta} H\lowercase{obli}\\
 B\lowercase{angalore 560089,} I\lowercase{ndia } \\
 and RRI,  S\lowercase{adashivanagar},  B\lowercase{angalore} 560080, I\lowercase{ndia}\\
\lowercase{sam@icts.res.in and sam@rri.res.in}\\
Rithwik  S. Vidyarthi\\
  D\lowercase{epartment of} M\lowercase{athematics}\\
C212 W\lowercase{ells} H\lowercase{all}\\
M\lowercase{ichigan} S\lowercase{tate} U\lowercase{niversity}\\
619 R\lowercase{ed} C\lowercase{edar} R\lowercase{oad}\\
E\lowercase{ast} L\lowercase{ansing}, MI 48824\\
\lowercase{rthwkvidyarthi@gmail.com}}
\end{center}

\begin{abstract}

Do co-adjoint orbits of 
Lie groups support a K\"{a}hler structure? We study this question from
a point of view derived from coherent states. We examine three examples
of Lie groups: the Weyl-Heisenberg group, $\mathrm{SU(2)}$ and 
$\mathrm{SU(1,1)}$. In cases, where the orbits admit a  
K\"{a}hler structure,
we show that  coherent states give us 
a K\"{a}hler embedding of the orbit into projective Hilbert space.
In contrast, squeezed states, (which like coherent states, also saturate the uncertainty
bound) only give us a symplectic embedding. We also study geometric quantisation  of  
the co-adjoint orbits of the group $\mathrm{SUT(2,\mathbb{R})}$ of real, special, 
upper triangular matrices in two dimensions. We glean some general insights from these examples.
Our presentation is semi-expository and accessible to physicists. 

\end{abstract}

\vskip 5pt
Keywords: Coherent states,  Squeezed States, Coadjoint orbits, Toda system

\section{Introduction}

Coadjoint orbits of Lie groups have a symplectic structure ~\cite{DWI}.  
It is natural to ask: do they have a K\"ahler structure? 
This question has been raised before ~\cite{V} in a different context.
In a recent thesis ~\cite{V}, Villa studied the coadjoint orbits of
semisimple Lie groups.
The motivation was to understand the geometry of the convex hull of coadjoint
orbits. In Theorem 2.27,  ~\cite{V} gives a necessary and sufficient
condition for the existence of a K\"{a}hler structure {\it{for semi-simple Lie
groups}}. 
In this semi-expository paper we study this question from the point of 
view of Perelomov coherent states in three examples. These are 
the Weyl-Heisenberg group,  $\mathrm{SU(2)}$ and $\mathrm{SU(1,1)}$.  
In each of these cases we also discuss squeezed states using a 
definition of squeezing that is motivated by
the Robertson-Schr\"{o}dinger uncertainty relations~\cite{Robertson,schrod}. 
Our paper also includes a discussion of
the group of special upper triangular matrices 
(non-zero entries only on the diagonal and above)
and  geometric quantization of its coadjoint orbits.

Coherent states were first studied by Schr$\ddot{\rm{o}}$dinger in an effort to
construct quantum states with classical-like behaviour \cite{schrodinger,fock,
sudarshan,glauber,klauder,optics}. He searched
for states which saturate the inequality expressed by the Heisenberg
uncertainty principle
\begin{equation}
\Delta q \Delta p\ge \hbar/2.
\label{uncertainty}
\end{equation}
These states are called minimum uncertainty states. By minimising the quantum 
uncertainty, one finds quantum states that are the closest
one can get to classical states. States of a classical system 
are described as points in phase space, which is a symplectic manifold. 
Coherent states can be thought of as localised around these points. 
They have equal uncertainty in position and momentum. These states are
called Schr\"{o}dinger coherent states (SCS) and are central to the physics
of harmonic oscillators and optics.

However, requiring minimum uncertainty is 
an incomplete characterisation of coherent states.
There exist other states which also saturate the Heisenberg bound. 
These states are called squeezed states \cite{Schnabel} and they play an
important role in physics. 
Squeezed states have a larger uncertainty in one variable (either position or momentum)
and a smaller uncertainty in the conjugate variable. From the viewpoint
of minimising uncertainty, squeezed states are just as good as coherent
states. Not only do squeezed and coherent states saturate the Heisenberg
uncertainty relation, but in  fact, they both saturate a slightly stronger
version, the Robertson-
Schr\"odinger inequality\cite{Robertson,schrod}.  We will elaborate on this in Appendix 1.

Squeezed states are extremely important in quantum optics
and quantum metrology.
Squeezed states of light are used, 
for example, in the LIGO detector to give an accurate determination
of position, while sacrificing accuracy in the conjugate variable,
momentum. Although, both saturate the uncertainty bound, 
squeezed states are physically quite different from coherent states.

Coherent states have been studied mathematically
from advanced group theoretic points of view\cite{perelomov}, 
using co-adjoint orbits of Lie Groups. The symplectic structure on the 
co-adjoint orbit 
gives the orbit the character of a classical phase space. Coherent states then give us an embedding of 
the classical phase space into the ray space of quantum mechanics. The ray space of quantum mechanics
admits a K\"{a}hler structure. Can one pull back this structure to the coadjoint orbit? 
Does the pullback agree with the K\"{a}hler structure (if it exists) on the coadjoint orbit?
Can coherent states be used to {\it define} a K\"{a}hler structure
on coadjoint orbits? These are the questions which motivate 
the present study. 
 We will show  in three pertinent examples 
that coherent states give us a K\"ahler embedding
of the classical phase space into the ray space of quantum mechanics,
while squeezed states give us only a symplectic embedding. This property
sets apart coherent states from the other minimum uncertainty 
states.

We  show from first principles the  fact  that the Weyl-Heisenberg coherent states, $\mathrm{SU(2)}$ coherent states and $\mathrm{SU(1,1)}$ coherent states embed the respective  coadjoint orbits into 
projective Hilbert space
%${\mathbb C}P^{\infty}$ 
such that the pull back of the Fubini Study form is K\"ahler,  thus illustrating a general result in \cite{{DIL}, {LM}, {Od}}.
The coadjoint orbits of 
interest are ${\mathbb C}$,  $ \mathrm{SU(2)}/\mathrm{U(1)} 
\equiv \mathrm{S^2}$ and $\mathrm{SU(1,1)}/\mathrm{U(1)} 
\equiv \mathrm{H}$ (upper half plane),  respectively. 
Next we  show that the pull back by squeezed states,  however,  
yields only a symplectic structure on the coadjoint orbit.

Our exposition includes a detailed study of the 
geometric quantization of the coadjoint orbits of the special 
upper triangular matrices. 
A  coadjoint orbit of  $\mathrm{SUT}$ is $\mathrm{SUT}^{+}$ \footnote{
In fact, this orbit is also a group, and represents the $AN$ part of the $KAN$ or
Iwasawa decomposition of $\mathrm{SL(2,\mathbb{R})}$.} which is intimately connected to  the  $2$-dimensional Toda system. 
In  ~\cite{Ad} Adler showed that a finite $n$-dimensional Toda system has a coadjoint 
orbit description of the group of lower triangular matrices of non-zero diagonal. 
In fact, one can restrict the  action to that of lower triangular matrices of 
determinant $1$ and positive diagonal elements. The orbit is homeomorphic to ${\mathbb{R}}_+^{n-1} \times {\mathbb{R}}^{n-1}$, 
just described by $a_i >0$, $i = 1,..., n-1$ and $ b_i$, $i =1,...,n$  such that $b_1+ b_2 + ...+ b_n =c$, $c$ a constant. 
In this paper we deal with the case $n=2$.  
In \cite{DG}, the authors geometrically quantized this system and studied the coherent states.
Coherent and squeezed states pulled back from projective spaces
have been considered in \cite{DGh} in a somewhat more general context.

The paper is structured as follows: In section 2,  (and in appendix 1) we introduce
coherent and squeezed states and describe how Schr\"odinger 
coherent states emerge
from geometric quantisation of co-adjoint orbits of the Weyl-Heisenberg group. 
In section 3, we define the ray space of quantum mechanics and 
describe the K\"ahler structure of the ray space.  In section 4 we  show that
the Schr\"odinger coherent states give us a K\"ahler embedding of the complex
plane into the ray space. and we show a similar embedding of the unit disc ($\mathrm{SU(1,1)}$ coherent states) and the sphere ($\mathrm{SU(2)}$ coherent states).  We show squeezed states give only a symplectic embedding.   In section 5, we study a coadjoint orbit
of the group of special upper triangular matrices. The phase space is the upper  half plane.  Section 6 is a concluding discussion.
An appendix (section 7)
describes the origin of our definition of squeezed states  
and another appendix treats the Berezin 
quantisation of the upper half plane 
which  is already known for the unit disc.

\section{Squeezed and Coherent States}
{\it Schr\"odinger coherent states:}
Schr\"odinger coherent states are defined as solutions to the eigenvalue
equation
\begin{equation}
(\hat{q}+i \hat{p})\ket{\alpha}=\alpha \ket{\alpha},
\label{coherent}
\end{equation}
where the eigenvalue $\alpha$ is a complex number.
These states saturate the uncertainty bound eq.(\ref{uncertainty}). 
The eigenvalue equation is easily solved to give an expression for coherent
states in terms of the oscillator eigenstates $\ket{n}$.
\begin{equation}
\ket{\alpha}=\exp[{-|\alpha|^2/2}]
\mathlarger{\sum}_{n=0}^{\infty}\frac{\alpha^n}{\sqrt{n!}} \ket{n}, 
\label{coherexpression}
\end{equation}
$\ket{n}$ are orthonormal 
eigenstates of the Hermitian operator $\hat{H}=(\hat{q}^2+\hat{p}^2)/2$.

{\it Squeezed states:}
As mentioned above, coherent states are not the
only states which saturate
the uncertainty inequality.
Squeezed states, defined by

\begin{equation}\label{squeezeddefinition}
\left(\hat{q}\lambda+i \frac{\hat{p}}{\lambda}\right)\ket{\alpha, \lambda}=\alpha \ket{\alpha, \lambda},
\end{equation}
also share this property.  
We describe  the orgin of this defintion of squeezed states in appendix 1.
Squeezed
states are anisotropic in phase space being ``squeezed"
in one direction and expanded
in the other. This squeezing operation in phase space 
preserves the symplectic structure but alters the complex structure
of the plane.  

Mathematically, Schr\"odinger
coherent states emerge naturally
from the Weyl-Heisenberg group ${\mathcal W}$,
which is the exponential of
the nilpotent W-H Lie algebra, generated
by $e_1,e_2,e_3$ 
with the only nonzero commutation relation being
\begin{equation}
[e_1,e_2]=e_3.
\label{whcommutation}
\end{equation}
The mathematical theory of coherent states starts with 
an irreducible, unitary 
representation of the W-H group
in a Hilbert space \cite{perelomov}. Taking
a fiducial vector $\ket{\psi_0}$ satisfying
\begin{equation}
(\hat{q}+i \hat{p})\ket{\alpha}=0,
\label{fiducial}
\end{equation}
we have
\begin{equation}
\ket{\alpha}=D(\alpha)\ket{\psi_0},
\label{alphadef}
\end{equation}
where the complex number $\alpha$ determines
$D(\alpha)=\exp{[i\hat{q}\alpha_1+i\hat{p}\alpha_2]}$, and the group commutation
relations ensure that (\ref{coherent})  is satisfied \cite{perelomov}.
The complex number $\alpha$ (or equivalently the two 
real numbers ($\alpha_1,\alpha_2$), where $\alpha = \alpha_1 + i \alpha_2$, 
parametrises the points of 
a two dimensional co-adjoint orbit of ${\mathcal W}$. The co-adjoint 
orbit $\Gamma$ has a natural symplectic structure from the 
Kirillov-Konstant-
Souriau construction, ~\cite{W}.
$\Gamma$ is in fact the classical phase space of the Harmonic oscillator,
identified with the complex plane. $\Gamma$ admits a K\"ahler structure,
with the symplectic, Riemannian and complex structures coexisting
compatibly.

\section{Ray Space as a K\"ahler Manifold}

The states of a quantum system are described projectively as rays in 
a Hilbert space ${\mathcal H}$. Let us consider the space of normalised states
${\mathcal N}=\{ \psi \in {\mathcal H}:\|\psi\|^2 =1\} $.
We can regard the 
Ray space as 
elements of ${\mathcal N}$,  modulo
a overall phase 
\begin{equation}
{\mathcal R}={\mathcal N}/\sim,
\label{rayspace}
\end{equation}
where $\ket{\psi}\sim\ket{\psi'}$ if $\ket{\psi}=\exp{i \gamma}\ket{\psi'}$. An equivalent definition of ${\mathcal R}$ is to view it as the space of one dimensional projections on ${\mathcal H}$. These can
be written as $\rho=|\psi\rangle \langle \psi|$. $\rho$ is Hermitean, 
$\rho^{\dagger}=\rho$, a projection operator $\rho^2=\rho$ and normalised 
$\mathrm{ Tr} (\rho)=1$. In finite dimensional quantum systems such as 
occur physically in spin systems, the ray space is $\mathbb{C}P^n$. In two of the systems
we deal with in this paper, the classical phase space is non-compact
and has infinite symplectic volume. As a result the Hilbert space ${\mathcal H}$
as well as the ray space ${\mathcal R}$ are infinite dimensional. We can 
describe the ray space as ${\mathbb C}P^n \;$ 
or ${\mathbb C}P^{\infty}$ depending on whether the dimension is finite or infinite.

The ray space
${\mathcal R}$ is naturally endowed with a metrical structure,  
(with distances $\delta$ 
determined by shortest geodesics of the Fubini-Study metric)
\begin{equation}
\|\langle\psi_1\rvert\psi_2\rangle\|=\cos{\delta/2},
\label{fubinistudy}
\end{equation}
where $\delta$ ranges from $0$ to $\pi$ and gives 
the distance between rays. $\delta$ is directly 
measurable in a laboratory as a transition probability.  

The ray space also has two more structures inherited from ${\mathcal H}$:
a symplectic structure and a complex structure. The symplectic
structure on the ray space ${\mathcal R}$ is described by  a closed,
non-degenerate two form, the curvature of the universal $\mathrm{U(1)}$ connection on 
the bundle:
${\mathcal N}\rightarrow {\mathcal R}$.
The symplectic structure on 
${\mathcal R}$ has been interpreted as a geometric phase \cite{WiSh}, which also is amenable
to experiments.  The complex structure is less evident in laboratory terms
but follows mathematically from the complex structure on ${\mathcal H}$. 
Physically, the complex structure is crucial to a discussion of time reversal
symmetry, which acts by conjugation on the complex structure. 
The three structures - metrical, symplectic and complex- give us a K\"ahler
structure on the ray space. 

The equation (\ref{coherexpression}) 
defining a coherent state gives us an embedding
of the classical phase phase into the ray space ${\mathcal R}$. Each 
complex number
$\alpha \in \Gamma$, determines an unique nonzero 
element $\ket{\alpha}\in{\mathcal H}$, which projects
down to the ray space $[\ket{\alpha}]$. (The square brackets 
refer to the equivalence class of $\ket{\alpha}$.) This can also be expressed
as a projection operator $\rho=\ket{\alpha}\bra{\alpha}$)    
\begin{gather}
\phi: \Gamma\rightarrow {\mathcal R}\notag\\
\phi(\alpha)= \rho=\ket{\alpha}\bra{\alpha}.
\label{embedd}
\end{gather}

 It is known  that in some examples, 
the coherent states provide us with a K\"ahler embedding
of a coadjoint orbit into the projective Hilbert space ${\mathbb C}P^n$ or ${\mathbb C}P^{\infty}$, \cite{DIL}, \cite{LM},  \cite{Od}. 
In this paper, we investigate the nature of this embedding for coherent
and squeezed states in three examples. 

It is well known  that the classical phase space 
$\Gamma$ of the oscillator is a K\"ahler manifold. As we remarked
above, the ray space ${\mathcal  R}$ has
a K\"ahler structure
which gives mutually compatible
symplectic, complex and metrical
structures. Given the embedding (\ref{embedd}) it is reasonable to 
ask if the pullback of the structures on ${\mathcal R}$ agrees with
the corresponding structure on $\Gamma$. We first discuss this
question for the Schr\"odinger coherent states, 
then the coherent states of $\mathrm{SU(2)}$ and $\mathrm{SU(1,1)}$.

Consider two tangent vectors $\dot{\alpha}$ and $\alpha'$ on the 
complex plane $\Gamma$. The symplectic and metrical structures on the $\alpha$
plane
are given respectively by 
\begin{equation}
\omega(\alpha',\dot{\alpha})= {\rm Im} \left(\alpha^{'*}\dot{\alpha} \right)
\label{symp}
\end{equation}
and 
\begin{equation}
g(\alpha',\dot{\alpha})= {\rm Re} \left(\alpha^{'*}\dot{\alpha} \right).
\label{metric}
\end{equation}

We can evaluate the push-forward of $\alpha'$ and $\dot{\alpha}$ to the ray space 
and use the definition 
(\ref{coherexpression}) of $\ket{\alpha}$, 
the metric and symplectic structure 
on $\mathcal R$, to  find  the real and imaginary parts
of 
\begin{equation}
\bra{\alpha}' P \dot{\ket{\alpha}},
\label{rayspacestructure}
\end{equation}
(where $P$ is the projector $\mathds{1}\ -\ket{\alpha}\bra{\alpha}$ orthogonal
to the rays of ${\mathcal H}$) 

A calculation given in detail in section 4
with (\ref{coherexpression}) shows that this works out 
to the symplectic (\ref{symp})  and metrical structure  (\ref{metric})
on $\Gamma$.
Thus Schr$\ddot{\rm{o}}$dinger coherent states
have the property that the pull back of the K\"ahler 
structure on ${\mathcal R}$  to $\Gamma$  agrees
with naturally occurring
structures on $\Gamma$: $\phi$ gives us a K\"ahler embedding of $\Gamma$ into
${\mathcal R}$.

This property distinguishes
coherent states from squeezed states. Squeezed states are got by composing the map (\ref{embedd}) with 
the transformation $z\rightarrow \lambda 
(z+\Bar{z})/2+\lambda^{-1} (z-\Bar{z})/2$. This transformation
is not complex analytic for $\lambda\ne 1$ and does not respect the complex 
structure of $\Gamma$. However it does preserve the symplectic form.
Squeezed states only give us a symplectic embedding 
but not a K\"ahler embedding.

\section{K\"ahler embedding of the 
classical phase space into projective Hilbert space.}
%${\mathbb C}P^{\infty}$} 
 In this section we explicitly show in examples that the coherent 
states provide us with a K\"ahler embedding
of a coadjoint orbit into the projective Hilbert space  ${\mathbb C}P^{\infty}$, 
which illustrates a  general result in   \cite{{DIL}, {LM}, {Od}}. In each case we also show that squeezed states only give a symplectic embedding.

\subsection{ K\"ahler embedding of the 
classical phase space: Schr$\ddot{\rm{\bf{o}}}$dinger Coherent states}
The coherent state on the complex plane is given by:
$$\lvert\alpha\rangle=\exp\left(\frac{-\|\alpha\|^2}{2}\right)\sum_0^\infty \frac{\alpha^n}{\sqrt{n!}}\lvert n\rangle.$$
Let $\alpha(s)$ be a curve in the complex plane. We have the following tangent vector on $\mathbb{C}$:
$$\alpha'=\frac{d\alpha}{ds}.$$
This defines a curve $\lvert\alpha(s)\rangle \in  \mathcal{N}$. We can project this down to the ray space to obtain a curve $[\lvert\alpha(s)\rangle] \in \mathcal{R}$.\newline
The push forward of $\alpha'$ to $\mathcal{N}$ is $\frac{d}{ds}\lvert\alpha(s)\rangle$. Composing with $$\pi:\mathcal{H}-\{0\} \to \mathcal{R}$$
we have $$\left(\frac{d}{ds}\lvert\alpha(s)\rangle\right)^\perp=P\left(\frac{d}{ds}\lvert\alpha(s)\rangle\right),$$
where $P=\mathds{1}-\lvert\alpha(s)\rangle\langle\alpha(s)\lvert$ is the projector perpendicular to $\lvert\alpha\rangle$.
\begin{equation*}
    \begin{split}
        \frac{d}{ds}\lvert\alpha(s)\rangle&=\frac{d}{ds}\left(e^{-\frac{\|\alpha\|^2}{2}}\sum_0^\infty \frac{\alpha^n}{\sqrt{n!}}\ket{n}\right)\\
        &=\left(\frac{d}{ds}\left(-\frac{\|\alpha\|^2}{2}\right)\right)\lvert\alpha\rangle+e^{-\frac{\|\alpha\|^2}{2}}\alpha'\sum_{0}^{\infty}\frac{n\alpha^{n-1}}{\sqrt{n!}}\ket{n}.\\
    \end{split}
\end{equation*}
On projection the first term, vanishes. Letting $\ket{b}=\frac{d}{ds}\lvert\alpha(s)\rangle$, we have
\begin{equation}
P \ket{b}=P\left(\frac{d}{ds}\lvert\alpha(s)\rangle\right)=P\left(e^{-\frac{\|\alpha\|^2}{2}}\alpha'\sum_{0}^{\infty}\frac{n\alpha^{n-1}}{\sqrt{n!}}\ket{n}\right).
\label{firstequation}
\end{equation}
Similarly, we set $\ket{b}=\frac{d}{dt}\lvert\alpha(t)\rangle$, to describe another tangent vector in the $\alpha$ plane. This leads to a similar expression as 
(\ref{firstequation}) with $\alpha'$ replaced by $\dot{\alpha}$. 
We need to calculate:
\begin{equation*}
    \begin{split}
        \bra{a} P \ket{b}&=\langle a\rvert(\mathds{1} -\lvert\alpha\rangle\langle\alpha\rvert)\lvert b\rangle\\
        &=\langle a\lvert b\rangle-\langle a\lvert\alpha\rangle\langle\alpha\lvert b\rangle.\\
    \end{split}
\end{equation*}
Carrying out these calculations we obtain:\newline
\begin{flalign*}
\langle a\rvert b\rangle&=\overline{\alpha'}\dot{\alpha}e^{-\|\alpha\|^2}\sum_{m,n=0}^{\infty} \frac{mn{\overline{\alpha}}^{m-1}\alpha^{n-1}}{\sqrt{m!n!}}\langle m\rvert n\rangle &\\
        &=\overline{\alpha'}\dot{\alpha}e^{-\|\alpha\|^2}\sum_{n=0}^\infty \frac{n^2(\overline{\alpha}\alpha)^{n-1}}{n!}&\\
        &=\overline{\alpha'}\dot{\alpha}e^{-\|\alpha\|^2}\left[\overline{\alpha}\alpha e^{\|\alpha\|^2}+e^{\|\alpha\|^2}\right]&\\
        &=\overline{\alpha'}\dot{\alpha}(1+\overline{\alpha}\alpha),&
\end{flalign*}

\begin{flalign*}
\langle\alpha\rvert b\rangle&=\dot{\alpha}e^{-\|\alpha\|^2}\sum_{m,n=0}^\infty\frac{\overline{\alpha}^m}{\sqrt{m!}}\frac{n\alpha^{n-1}}{\sqrt{n!}}\langle m\rvert n \rangle &\\
        &=\dot{\alpha}e^{-\|\alpha\|^2}\sum_{n=0}^\infty\frac{n\overline{\alpha}^n \alpha^{n-1}}{n!}&\\
        &=\dot{\alpha}\overline{\alpha}.&
\end{flalign*}
Similarly,
$\langle a\rvert\alpha\rangle =\alpha\overline{\alpha'}.$

Substituting these values in the equation above we obtain:
$$\bra{a}P\ket{b}=\overline{\alpha'}\Dot{\alpha},$$
which exactly agrees with the Hermitean metric on $\mathbb{C}$.

\subsubsection{ \bf{Symplectic embedding of the 
classical phase space of the harmonic oscillator  by  squeezed  states:}}

To briefly recapitulate the main formulae,
recall that $a = \frac{\left(\hat{q} + i \hat{p}\right)}{\sqrt{2}}$ and $a^{+}  = \frac{\left(\hat{q} - i \hat{p}\right)}{\sqrt{2}}.$ 
Coherent states are defined by the formula 
$a \lvert \alpha \rangle = \alpha \lvert \alpha \rangle$. 
The fiducial (or `vacuum') state  
$\lvert 0 \rangle $ is defined by $a \lvert 0 \rangle =0$,
and the displacement operator  by 
$D(\alpha) = \mathrm{exp}( \alpha a^{+} - \bar{\alpha} a)$.
The displacement operator $D(\alpha)$ satisfies 
$D^{-1}(\alpha) a D(\alpha) = a + \alpha$, so that
$ \lvert \alpha \rangle = D(\alpha) \lvert 0 \rangle $.

Consider a deformed operator
$\tilde{a} = \frac{1}{\sqrt{2}} \left( \lambda  \hat{q} + \frac{i}{\lambda} \hat{p} \right) = \frac{1}{2} \left( \lambda + \frac{1}{\lambda} \right) a + \frac{1}{2} \left(\lambda - \frac{1}{\lambda} \right) a^+$,  where  $\lambda$ (real, positive) is the  squeezing parameter.
It follows that $\tilde{a} = \cosh(v) a + \sinh(v) a^{+}$. 
We now define the squeezed vacuum by $\tilde{a} \lvert 0;\lambda \rangle =0$.
By Robertson-Schrodinger, (see appendix 1),  
this is a minimum uncertainty state. Define translates of 
the squeezed vacuum as $\lvert \alpha; \lambda \rangle = D(\alpha) \lvert  0 ; \lambda \rangle $.
Then $D^{-1}(\alpha) \tilde{a} D(\alpha) = \tilde{a} + \alpha \cosh(v) + \bar{\alpha} \sinh(v) $. 
It follows that $\tilde{a} D(\alpha) \lvert 0;\lambda \rangle = \tilde{\alpha} D(\alpha)\lvert 0;\lambda \rangle $
where $\tilde{\alpha} = \alpha \cosh(v) + \bar{\alpha} \sinh(v) = e^v \alpha_1 + i e^{-v} \alpha_2$ where $\alpha = \alpha_1 + i \alpha_2$. We write $
 D(\alpha) \lvert 0;\lambda \rangle$ as  
$ \lvert \tilde{\alpha};\lambda \rangle$.
Below we drop the tilde and just write $ \lvert \alpha;\lambda \rangle$ for 
the squeezed states. 

Fix $\lambda$ and consider the embedding of the complex plane into the ray space:
$\alpha \mapsto \lvert \alpha; \lambda \rangle  \mapsto [   \lvert \alpha; \lambda \rangle  ].$
The translation group acts on the complex plane which is a homogenous space. To calculate the pull back of the Fubini-Study form by the above embedding,  it is enough to work at the origin.  
Consider $\alpha = \alpha(s)$, a curve in the complex plane such that $\alpha(0) =0$ and $\frac{d \alpha(s)}{d s} |_{s=0}= \alpha^{\prime} $. 
$\lvert \alpha(s) ; \lambda \rangle = D(\alpha(s)) \lvert 0; \lambda \rangle $.
Let $\lvert \psi' \rangle = \frac{d \lvert \alpha(s); \lambda \rangle }{ds} = \frac{d D(\alpha(s)) \lvert 0; \lambda \rangle }{ds}.$ From 
 $\frac{d D(\alpha(s)) }{ds}|_{s=0} = \frac{d \alpha}{ds} a^+ - \frac{d \bar{\alpha}}{ds} a $, we find that
$\lvert \psi^{\prime}  \rangle =\left( \alpha^{\prime} a^+ - \bar{\alpha}^{\prime} a \right) \lvert 0; \lambda \rangle $.
Likewise  for a curve parametrised by $t$, $\lvert \Dot{\psi} \rangle = \left( \Dot{\alpha} a^+ - \Dot{\bar{\alpha}} a \right) \lvert 0; \lambda \rangle $ where $\cdot$ stands for $\frac{d}{dt}$. 
With $\alpha_1, \alpha_2$ the real and imaginary parts of $\alpha$,
$\lvert \psi^{\prime}  \rangle = \left(\alpha_1^{\prime} e^{v} + i \alpha^{\prime}_2 e^{-v} \right) \tilde{a}^+ \lvert 0; \lambda \rangle $ and 
$\lvert\Dot{ \psi }\rangle = \left(\Dot{\alpha_1} e^v + i \Dot{\alpha_2} e^{-v} \right) \tilde{a}^+ \lvert 0; \lambda \rangle $.

To evaluate the pullback of the K\"{a}hler form, 
we need to find $\langle \Dot{\psi} \rvert P \lvert \psi^{\prime} 
\rangle$ where $P = \mathds{1} - \lvert 0; \lambda \rangle 
\langle 0 ; \lambda \rvert $. 
$\langle \Dot{\psi}  \rvert P \lvert \psi^{\prime} \rangle  = \langle 0; 
\lambda \rvert \tilde{a} \tilde{a}^+ \lvert 0; \lambda \rangle  - \langle \Dot{\phi} \lvert 0; \lambda \rangle \langle 0 ; \lambda \rvert \psi^{\prime} \rangle .$ 
Using $[ \tilde{a}, \tilde{a}^+ ] = 1 $ we have (since $\langle 0 ; \lambda \rvert \psi^{\prime} \rangle  =0$)
$\langle \Dot{\phi} \rvert \psi^{\prime} \rangle  = 
(\alpha_1^{\prime} \Dot{\alpha_1} e^{2v} + \alpha_2^{\prime} \Dot{\alpha_2} e^{-2v} ) + i ( \Dot{\alpha_1} \alpha_2^{\prime} - \Dot{\alpha_2} \alpha_1^{\prime} ).$
The pullback of the Riemannian metric (the real part of the above expression)  
agrees with the Riemannican metric on the complex plane only for $v=0$. 
However, the pullback of the symplectic form (the imaginary part of the above
expression) does agree with the symplectic form on the complex plane for all $v$.  
Thus the pull back by squeezed states gives a 
symplectic structure but not a K\"ahler structure.

\subsection{ Symplectic  and K\"{a}hler embedding of the sphere in Projective Hilbert space}

Let the generators of the Lie algebra of $\mathrm{SU(2)}$ be $L_x, L_y, L_z$, all Hermitian. Then the non-vanishing commutators are 
$[L_x, L_y] = i L_z$, $[L_y, L_z] = i L_x$  and $[L_z, L_x] = i L_y$. Let $L_+ = (L_x + i L_y)/\sqrt{2}$ and $L_{-} = (L_x - i L_y)/\sqrt{2}$.  
Then $[L_z, L_{+} ] = L_{+} $, $[L_z, L_{-}] = -L_{-}$ and 
$[L_{+}, L_{-}] =  L_z$. 
With the parameter $\lambda = e^v$, 
let us define ``squeezed operators'' 
$\tilde{L}_{\pm} = e^v L_x \pm i e^{-v} L_y$.

In analogy with the earlier case, we define
the squeezed vacuum state as the kernel of $\tilde{L}_-$
\begin{equation}
\tilde{L}_- \lvert 0; \lambda \rangle =0.  
\end{equation}

Let $\alpha\in {\mathbb C}$ be the  stereographic coordinate of a point on the sphere. 
Then $D(\alpha) = \mathrm{exp}( \bar{\alpha} L_- - \alpha L_+ )$ 
is the displacement operator \cite{perelomov}. A general squeezed state can be got by displacing
the squeezed vacuum using  $D(\alpha)$. The squeezed 
operators $\tilde{L}$ are related to the $L$s as follows:
\begin{eqnarray}
%Infinitesimally,  when $\alpha$ is small,  one can write this operator as 
%$D(\alpha) = 1 + \bar{\alpha} L_+ - \alpha L_-$.
\tilde{L}_+ &=& \cosh(v) L_+ + \sinh(v) L_- ,\\
\tilde{L}_- &=& \sinh(v) L_+ + \cosh(v) L_-.
\end{eqnarray}
Or in inverse form,
\begin{eqnarray}
L_+ &=& \cosh(v) \tilde{L}_+ - \sinh(v) \tilde{L}_-,\\ 
L_- &=& -\sinh(v) \tilde{L}_+ + \cosh(v) \tilde{L}_-.
\end{eqnarray}

Let us now evaluate the pullback of the K\"{a}hler form on the projective
Hilbert space. Because the coadjoint orbit is a homogeneous space, it is sufficient to do the calculation at a single point, which we conveniently choose to be the squeezed vacuum. 
Let $\alpha=\alpha(s)$ be a curve in the complex plane such that $\alpha(0) =0$. 
Then $ \lvert \alpha(s); \lambda \rangle = D(\alpha)\lvert 0; \lambda \rangle$, where as before  
$\lvert 0; \lambda \rangle$ is the squeezed vaccum and 
$\lvert \alpha(s) ; \lambda \rangle$ is the displaced squeezed state.
Then one sees that
$\lvert \psi^{\prime} \rangle = \frac{d}{ds} \lvert \alpha; 
\lambda \rangle = (\frac{d \bar{\alpha}}{ds} L_- - \frac{d \alpha }{ds} L_+) 
\lvert 0; \lambda \rangle = -(\bar{\alpha}^{\prime} \sinh(v) + \alpha^{\prime} 
\cosh(v)) \tilde{L}_+ \lvert 0; \lambda \rangle $.

Thus
$\lvert \psi^{\prime} \rangle = (\alpha_1^{\prime} e^{v} + i \alpha_2^{\prime} e^{-v}) \tilde{L}_+ \lvert 0; \lambda \rangle $, where $\alpha = \alpha_1 + i \alpha_2$.
Similarly,  
$\lvert \Dot{\psi}  \rangle = (\Dot{\alpha_1} e^{v} - i \Dot{\alpha_2}e^{-v})    \tilde{L}_- \lvert 0; \lambda \rangle $.
We calculate $\langle \Dot{\psi} \lvert P \rvert \psi^{\prime} \rangle $ where $P = \mathds{1} - \lvert 0;\lambda \rangle \langle 0; \lambda \rvert $. 
Since $\langle 0; \lambda \rvert \psi^{\prime} \rangle = 0 $ and  
$\langle \Dot{\psi} \rvert \psi^{\prime} \rangle =  
(\alpha_1^{\prime} e^{v} + i \alpha_2^{\prime} e^{-v}) 
(\Dot{\alpha_1} e^{v} - i \Dot{\alpha_2}e^{-v}) \langle 0; 
\lambda | \tilde{L}_- \tilde{L}_+ \rvert 0; \lambda 
\rangle $ and $[\tilde{L}_-, \tilde{L}_+ ] = -  L_z$, we find 
that the pull back Hermitian metric is (apart from an overall constant $- \langle 0; \lambda \rvert L_z \rvert 0; \lambda \rangle$),
\begin{equation}
\left\{ (\alpha_1^{\prime} \Dot{\alpha}_1 e^{-2v} + \alpha_2^{\prime} \Dot{\alpha}_2 e^{2v} ) 
+ i ( \alpha_2^{\prime} \Dot{\alpha}_1 - \alpha^{\prime}_1 \Dot{\alpha_2} ) \right\}. 
\end{equation}
The real part corresponds to the Riemannian metric on the complex plane and the imaginary part corresponds to the symplectic form.
For $v=0$ it is a K\"ahler embedding,  
but if $ v \neq 0$, we get only a symplectic 
embedding since the Riemannian metric is 
not compatible with the symplectic form and the complex structure.

\subsection{K\"ahler embedding of the unit disc: $\mathrm{SU}(1,1)$ case}

This following calculation repeats for 
the upper half plane the calculation we did for Schr\"{o}dinger coherent states. The argument is very similar.  It is actually more convenient to work on the unit disc. The expression for coherent states are 
taken from section 2 of \cite{brif}. There is a 
parameter $k$ which is continuous and 
$k>\frac{1}{2}$.
\newline

The coherent states are given as: 
$$\ket{\alpha,k}=(1-\overline{\alpha}\alpha)^k \sum_{n=0}^\infty \left[ \frac{\Gamma(n+2k)}{n!\Gamma(2k)} \right]^{\frac{1}{2}}\alpha^n\ket{n,k},$$
where $\bra{m,k}{n,k} \rangle=\delta_{mn}$. $k$ is fixed from this point on. \newline
Consider a curve $\alpha(s)$ in the unit disc. The tangent vector is given by: $$\alpha'(s)=\frac{d}{ds}\alpha(s).$$
The corresponding tangent vector in $\mathcal{N}$ is the pushforward, 
$$\frac{d}{ds}\ket{\alpha(s),k}=u\ket{\alpha(s),k}+(1-\overline{\alpha}\alpha)^k \sum_{n=0}^\infty \left[ \frac{\Gamma(n+2k)}{n!\Gamma(2k)} \right]^{\frac{1}{2}}n\alpha^{n-1}\ket{n,k}\alpha'.$$
Now, we have the projection operator, $P=\mathds{1}-\ket{\alpha}\bra{\alpha}$. Denote:
\begin{equation*}
\begin{split}
\ket{b}&=P\frac{d}{ds}\ket{\alpha(s),k}\\
&=(1-\overline{\alpha}\alpha)^k \sum_{n=0}^\infty 
\left[ \frac{\Gamma(n+2k)}{n!\Gamma(2k)} \right]^{\frac{1}{2}}n\alpha^{n-1}\ket{n,k}\alpha',\\
\end{split}
\end{equation*}

\begin{equation*}
\begin{split}
\ket{a}&=P\frac{d}{dt}\ket{\alpha(t),k}\\
&=(1-\overline{\alpha}\alpha)^k \sum_{n=0}^\infty \left[ \frac{\Gamma(n+2k)}{n!\Gamma(2k)} \right]^{\frac{1}{2}}n\alpha^{n-1}\ket{n,k}\Dot{\alpha}.\\
\end{split}
\end{equation*}
\newline
We need to compute $$\langle a\rvert P\rvert b\rangle=\langle a \rvert b \rangle - \langle a\rvert\alpha\rangle\langle \alpha\rvert b\rangle.$$
\newline
The terms are calculated using binomial expansion. 
$$\frac{1}{(1-x)^{2k}}=\sum_{n=0}^\infty {2k+n-1 \choose n} x^n=\sum_{n=0}^\infty \frac{\Gamma(n+2k)}{n!\Gamma(2k)}x^n.$$
\newline
Now,
\begin{flalign*}
\langle a\rvert b\rangle &=(1-\overline{\alpha}\alpha)^{2k}\overline{\alpha'}\dot{\alpha}\sum_{n=0}^\infty \frac{\Gamma(n+2k)}{n!\Gamma(2k)}n^2(\overline{\alpha}\alpha)^{n-1}&\\
   &=\frac{2k}{(1-\overline{\alpha}\alpha)^2}\overline{\alpha'}\dot{\alpha}(1+2k\overline{\alpha}\alpha),&
\end{flalign*}

\begin{flalign*}
        \langle a\rvert\alpha\rangle&=(1-\overline{\alpha}\alpha)^{2k}\overline{\alpha}\dot{\alpha}\sum_{n=0}^\infty \frac{\Gamma(n+2k)}{n!\Gamma(2k)}n(\overline{\alpha}\alpha)^{n-1}&\\
        &=\frac{2k}{1-\overline{\alpha}\alpha}\overline{\alpha}\dot{\alpha} ,&
\end{flalign*}

\begin{flalign*}
\langle \alpha \rvert b \rangle &= \frac{2k}{1-\overline{\alpha}\alpha}\overline{\alpha'}\alpha.&\\
\end{flalign*}
Putting it all together, we calculate:
\begin{equation*}
    \begin{split}
        \langle a \rvert b \rangle - \langle a\rvert\alpha\rangle\langle \alpha\rvert b\rangle&=\overline{\alpha'}\dot{\alpha}\left[\frac{2k}{(1-\overline{\alpha}\alpha)^2}(1+2k\overline{\alpha}\alpha)-\frac{(2k)^2\overline{\alpha}\alpha}{(1-\overline{\alpha}\alpha)^2}\right]\\
        &=\frac{2k\overline{\alpha'}\dot{\alpha}}{(1-\overline{\alpha}\alpha)^2},
    \end{split}
\end{equation*}
which is the K\"{a}hler metric on the unit disc apart from a constant 
depending on $k$. 

{\bf Note:} 
The calculations above for $\mathrm{SU(1,1)}$ 
and those for $\mathrm{SU(2)}$ given above can be adapted to each other after supplying 
the required signs. We do not repeat the calculations for the squeezed states for $\mathrm{SU(1,1)}$.

\section{An Example: Upper Triangular Matrices 
of unit determinant}

In this section we study the geometric quantisation
of the coadjoint orbits of the group of $2 \times 2$ upper triangular matrices with a   different point of view from \cite{DG}.  In the process we give elementary proofs of some known results in \cite{Ad}.

Let $\mathrm{SUT}(2,\mathbb{R})$ denote the Lie group 
of $2 \times 2$ upper triangular matrices over the reals with determinant $1$.
$$\rm{SUT}(2,\mathbb{R})=\left\{ \begin{pmatrix}
a & b \\
0 & \frac{1}{a}\\
\end{pmatrix}: \it{a}  \in \mathbb{R} \backslash \{\rm{0}\},      \it{b}  \in \mathbb{R} \right  \}.$$
$\rm{SUT}(2,\mathbb{R})$ is not a connected group. It has two disjoint components, matrices of positive diagonal and negative diagonal elements respectively. Let $\mathfrak{sut}$ denoted the Lie algebra and $\mathfrak{sut}^*$ denote it's dual. Then
$$\mathfrak{sut}=\left \{ \begin{pmatrix}
u & v\\
0 & -u \\
\end{pmatrix}: u,v \in \mathbb{R} \right \},$$
$$\mathfrak{sut}^*=\left \{ \begin{pmatrix}
u & 0\\
v & -u \\
\end{pmatrix}: u,v \in \mathbb{R} \right \}.$$
The group $\mathrm{SUT}(2,\mathbb{R})$ acts on its Lie algebra $\mathfrak{sut}$ by the adjoint action. This induces an action on the dual space, $\mathfrak{sut}^*$, called the co-adjoint action. 
Take the following basis for $\mathfrak{sut}$, $\mathcal{B}=\{E_1, E_2\}$

\begin{equation*}
    \begin{split}
        E_1 & =\begin{pmatrix}
                   0 & 1 \\
                   0 & 0 \\
                  \end{pmatrix} , \\
E_2& =\begin{pmatrix}
1 & 0 \\
0 & -1 \\
\end{pmatrix}.
    \end{split}
\end{equation*}
The corresponding dual basis of $\mathfrak{sut}^*$ is given by $\mathcal{B}^*=\{E_1^*,E_2^*\}$
\begin{equation*}
    \begin{split}
        E_1^* & =\begin{pmatrix}
                   0 & 0 \\
                   1 & 0 \\
                  \end{pmatrix} ,\\
E_2^*& =\frac{1}{2}\begin{pmatrix}
1 & 0 \\
0 & -1 \\
\end{pmatrix}.
    \end{split}
\end{equation*}

The factor of $\frac{1}{2}$ in $E_2^*$ comes from the the normalization  $\rm{Tr}(E_2 E_2^*)=1$.
 
Let $g=\begin{bmatrix}
g_1 & g_2 \\
0 & \frac{1}{g_1} \\
\end{bmatrix}$ be an element in $\mathrm{SUT}(2,\mathbb{R})$ and $X=\begin{bmatrix}
u_0 & 0 \\
v_0 & -u_0 \\
\end{bmatrix}$ be an element of  $\mathfrak{sut}^*$. 

\begin{lem}
The co-adjoint action of $g \in \mathrm{SUT} (2,\mathbb{R})$ on $\mathfrak{sut}^*$, denoted by ${\rm Ad}_g^*$, is given by the following:
$$\mathrm{Ad}_g^*:\mathfrak{sut}^* \to \mathfrak{sut}^*$$

such that 
 $$\mathrm{Ad}_g^* \begin{bmatrix}
u_0 & 0 \\
v_0 & -u_0 \\ 
\end{bmatrix} =   \begin{bmatrix}
u_0 + g_2 g_1^{-1} v_0 & 0 \\
v_0 g_1^{-2} & -(u_0  + g_2 g_1^{-1} v_0)  \\ 
\end{bmatrix}     .$$
\end{lem}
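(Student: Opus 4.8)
The plan is to reduce the coadjoint action to a single trace-form computation followed by a projection. First I would fix the pairing between $\mathfrak{sut}^*$ and $\mathfrak{sut}$ to be the trace form $\langle \xi, Y\rangle = \mathrm{Tr}(\xi Y)$; this is precisely the pairing under which the stated dual basis $\mathcal{B}^*$ is dual to $\mathcal{B}$, as the normalization $\mathrm{Tr}(E_2 E_2^*)=1$ already records (and one checks $\mathrm{Tr}(E_1 E_2^*)=\mathrm{Tr}(E_2 E_1^*)=0$, $\mathrm{Tr}(E_1 E_1^*)=1$). With this identification the coadjoint action is defined by the adjoint-pairing relation $\langle \mathrm{Ad}_g^* \xi, Y\rangle = \langle \xi, \mathrm{Ad}_{g^{-1}} Y\rangle = \mathrm{Tr}\!\left(\xi\, g^{-1} Y g\right)$ for all $Y \in \mathfrak{sut}$, which is the convention making $\mathrm{Ad}^*$ a left action.

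Next I would invert $g$ explicitly, $g^{-1} = \begin{pmatrix} g_1^{-1} & -g_2 \\ 0 & g_1 \end{pmatrix}$, and use cyclicity of the trace to rewrite the defining relation as $\langle \mathrm{Ad}_g^* \xi, Y\rangle = \mathrm{Tr}\!\left((g\xi g^{-1})\,Y\right)$. I would then compute the conjugate $M := g\xi g^{-1}$ by two matrix multiplications. Its $(1,1)$, $(2,1)$ and $(2,2)$ entries come out to be $u_0 + g_2 g_1^{-1} v_0$, $\ v_0 g_1^{-2}$ and $-(u_0 + g_2 g_1^{-1} v_0)$ respectively, matching the claimed entries, together with a nonzero $(1,2)$ entry.

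The one genuinely delicate point is that $M = g\xi g^{-1}$ is not itself of the lower-triangular, traceless shape that represents elements of $\mathfrak{sut}^*$: conjugation does not preserve that subspace, so $\mathrm{Ad}_g^* \xi$ is not literally $M$ but rather the representative of the linear functional $Y \mapsto \mathrm{Tr}(MY)$ on $\mathfrak{sut}$. Here I would observe that for $Y = \begin{pmatrix} a & b \\ 0 & -a \end{pmatrix} \in \mathfrak{sut}$ one has $\mathrm{Tr}(MY) = M_{11}\,a + M_{21}\,b - M_{22}\,a$, so the $(1,2)$ entry of $M$ is paired only against the (vanishing) lower-left entry of $Y$ and is therefore invisible to the pairing. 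Consequently the unique element of $\mathfrak{sut}^*$ inducing the same functional is obtained from $M$ simply by deleting its $(1,2)$ entry, and reading off the surviving entries yields exactly the matrix in the statement. I expect this projection step to be the main conceptual obstacle; once it is understood, the matrix arithmetic of the previous paragraph is entirely routine.
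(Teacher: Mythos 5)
Your proof is correct, and it takes a recognizably different route from the paper's, though both rest on the same trace pairing $\langle A,B\rangle=\mathrm{Tr}(AB)$ and the same defining relation $\langle \mathrm{Ad}_g^*\xi, Y\rangle=\langle \xi, \mathrm{Ad}_{g^{-1}}Y\rangle$. The paper never forms the conjugate $g\xi g^{-1}$: it instead computes the adjoint action on the basis of $\mathfrak{sut}$, namely $\mathrm{Ad}_{g^{-1}}E_1=g_1^{-2}E_1$ and $\mathrm{Ad}_{g^{-1}}E_2=2g_2g_1^{-1}E_1+E_2$, pairs $X=2u_0E_2^*+v_0E_1^*$ against these to obtain the two components $v_0g_1^{-2}$ and $2(u_0+v_0g_2g_1^{-1})$, and then solves for the entries of the lower-triangular representative by two further trace evaluations; working in the dual basis means the answer lands in the model subspace automatically, so the projection subtlety you isolate never surfaces there. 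Your route instead uses cyclicity, $\mathrm{Tr}\bigl(\xi\, g^{-1}Yg\bigr)=\mathrm{Tr}\bigl((g\xi g^{-1})Y\bigr)$, does a single conjugation, and then must argue that the spurious $(1,2)$ entry of $M=g\xi g^{-1}$ is invisible to the pairing because elements of $\mathfrak{sut}$ have vanishing $(2,1)$ entry --- you correctly flag this as the one delicate step, and your disposal of it is sound (tracelessness of $M$, preserved under conjugation, guarantees that after deleting the $(1,2)$ entry the matrix indeed lies in the lower-triangular traceless model of $\mathfrak{sut}^*$, and the two functionals agree since $\mathrm{Tr}(MY)=2M_{11}a+M_{21}b$ in both cases). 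Your matrix arithmetic checks out: the $(1,1)$, $(2,1)$, $(2,2)$ entries of $M$ are exactly $u_0+g_2g_1^{-1}v_0$, $v_0g_1^{-2}$, $-(u_0+g_2g_1^{-1}v_0)$. What your version buys is conceptual transparency --- it exhibits the coadjoint action as ``conjugate, then project onto the lower-triangular model,'' making the formula's shape self-explanatory; what the paper's version buys is the explicit adjoint action on the generators $E_1,E_2$, a computation that is implicitly reused later when tangent vectors to the orbit are identified with elements $\mathrm{ad}_V^*A$.
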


\begin{proof}
Let us denote by $\langle A,B \rangle = \mathrm{Tr}(AB)$, $A \in \mathfrak{sut}^*$, $B \in \mathfrak{sut}$.
By defintion, $\langle \mathrm{Ad}g^* X, Y \rangle = \langle X, \mathrm{Ad}_g Y \rangle$. where $\mathrm{Ad}_{g ^{-1}}Y = g^{-1} Y g$.

It is easy to see that if $Y=E_1$ $\mathrm{Ad}_{g^{-1}} Y = g_{1}^{-2} E_1$. 
 
Similarly  if $Y =E_2$,  $\mathrm{Ad}_{g^{-1}} Y = 2 g_2 g_1^{-1} E_1 +E_2$. 
 
Then if $Y =E_1$,  $ \langle  \mathrm{Ad}_g^* (2 u_0 E_2^* + v_0 E_1^*),  Y \rangle = v_0 g_{1}^{-2}$. 
 
If $Y = E_2$,  then  $\langle \mathrm{Ad}_g^* (2 u_0 E_2^* + v_0 E_1^*), Y \rangle = 2 (u_0 + v_0 g_2 g_1^{-1} )$.

Let $\mathrm{Ad}_g^*  \begin{bmatrix}
u_0 & 0 \\
v_0 & -u_0 \\ 
\end{bmatrix} =   \begin{bmatrix}
a & 0 \\
c & -a  \\ 
\end{bmatrix}$. Then clearly, $\mathrm{Tr} \left(   \begin{bmatrix}
a & 0 \\
c & -a  \\ 
\end{bmatrix}\begin{bmatrix}
0& 1 \\
0 & 0  \\ 
\end{bmatrix} \right) = c= v_0 g_1^{-2}$.

Similarly,     $\mathrm{Tr} \left(   \begin{bmatrix}
a & 0 \\
c & -a  \\ 
\end{bmatrix}\begin{bmatrix}
1& 0 \\
0 & -1  \\ 
\end{bmatrix} \right) = 2a= 2 ( u_0 + v_0 g_2 g_1^{-1} )$   and  thus $a= u_0 + v_0 g_2 g_1^{-1}$. 
\end{proof}
\begin{lem}
The orbit of the co-adjoint action on $X \in \mathfrak{g}^*$, denoted by $\mathcal{O}_X$ is given as follows.
\begin{equation*}
    \begin{split}
        \mathcal{O}_X &= \{\mathrm{Ad}_g^*.X: g \in \mathrm{SUT}(2,\mathbb{R})\} \\
        &=\left\{\begin{bmatrix}
        u_0-\lambda v_0 & 0 \\
        \mu v_0 & -u_0+\lambda v_0 \\
        \end{bmatrix} : \lambda, \mu \in \mathbb{R},\; \mu > 0
        \right \}.
    \end{split}
\end{equation*}
$\mathcal{O}_X$ is a 2-plane when $v_0 \neq 0$   (homeomorphic to the upper half plane).  It is a point when $v_0=0$.
\end{lem}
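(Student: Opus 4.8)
The plan is to read the orbit straight off the formula of the preceding lemma and then repackage the group data into the variables $\lambda,\mu$ of the statement. First I would apply Lemma 5.1 to $X=\begin{bmatrix} u_0 & 0 \\ v_0 & -u_0 \end{bmatrix}$, giving
$$\mathrm{Ad}_g^* X = \begin{bmatrix} u_0 + g_2 g_1^{-1} v_0 & 0 \\ v_0 g_1^{-2} & -(u_0 + g_2 g_1^{-1} v_0) \end{bmatrix}$$
for a general $g=\begin{bmatrix} g_1 & g_2 \\ 0 & g_1^{-1} \end{bmatrix}$ with $g_1 \in \mathbb{R}\setminus\{0\}$ and $g_2\in\mathbb{R}$. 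Matching this against the target form, I would set $\lambda = -g_2 g_1^{-1}$ and $\mu = g_1^{-2}$, so that the two relevant entries become $u_0 - \lambda v_0$ and $\mu v_0$ exactly.

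The heart of the argument is pinning down the precise ranges of $\lambda$ and $\mu$ as $g$ sweeps the group. Since $g_1$ runs over $\mathbb{R}\setminus\{0\}$, the quantity $\mu = g_1^{-2}$ is always strictly positive and attains every value in $(0,\infty)$; this is what forces the constraint $\mu>0$. Holding $g_1$ fixed and nonzero while letting $g_2$ vary freely over $\mathbb{R}$, the ratio $\lambda=-g_2 g_1^{-1}$ attains every real value. Hence $(\lambda,\mu)$ ranges over exactly $\mathbb{R}\times(0,\infty)$, which establishes the displayed description of $\mathcal{O}_X$.

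Finally I would treat the topology in the two cases. When $v_0=0$, the off-diagonal correction $g_2 g_1^{-1} v_0$ and the entry $v_0 g_1^{-2}$ both vanish, leaving the single matrix $X$, so the orbit is a point. When $v_0\neq 0$, the assignment sending $(\lambda,\mu)$ to the matrix with entries $u_0-\lambda v_0$ and $\mu v_0$ is a bijection onto the orbit: the upper-left entry recovers $\lambda$ and the lower-left entry recovers $\mu$ uniquely because $v_0\neq 0$, and both the map and its inverse are manifestly continuous. The parameter domain $\{(\lambda,\mu): \mu>0\}$ is the open upper half plane, so $\mathcal{O}_X$ is a $2$-plane homeomorphic to it, consistent with the phase space identified in the introduction.

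I expect no genuine obstacle, since the content is a surjectivity/injectivity bookkeeping exercise resting on Lemma 5.1. The one point demanding care is the range of $\mu$: one must observe that $g_1^{-2}$ is automatically positive and surjects onto $(0,\infty)$ rather than onto all of $\mathbb{R}$, which is exactly what yields a half-plane (not a full plane). A secondary subtlety worth noting is that the sign of $v_0$ merely flips the half-line in which the lower-left entry lives, so it does not alter the homeomorphism type.
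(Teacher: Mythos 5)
Your proposal is correct and follows exactly the route of the paper, whose entire proof consists of the substitution $\mu = g_1^{-2}$, $\lambda = -g_2 g_1^{-1}$ into the formula of the preceding lemma. Your additional bookkeeping --- verifying that $(\lambda,\mu)$ sweeps precisely $\mathbb{R}\times(0,\infty)$, and that for $v_0\neq 0$ the parametrisation is a homeomorphism while for $v_0=0$ the orbit degenerates to a point --- simply makes explicit what the paper leaves to the reader.
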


\begin{proof}
Take $\mu = g_1^{-2}$ and $\lambda =- g_2 g_1^{-1}$. 
\end{proof}
From now on, we take $v_0 \neq 0$.\newline
Fix a point $A=\begin{bmatrix}
a_1 & 0 \\
a_2 & -a_1 \\
\end{bmatrix}$ in $\mathcal{O}_X$. 
$\mathcal{O}_X$ is 
the upper half plane and any point on this plane is given as $\begin{bmatrix}
s & 0\\
t & -s \\
\end{bmatrix}$. This is diffeomorphic to $\mathbb{R}^2$ with global charts given as :

$\Psi_{v_0}:  \begin{bmatrix}
s & 0\\
t & -s \\
\end{bmatrix} \mapsto (\lambda, \mu) $, where $\lambda = \frac{u_0 - s}{v_0}$ and $ \mu = \frac{t}{v_0}$. 

Let $v_0 >0$.  The upper half plane has a symplectic (K\"ahler) form given by $d \lambda \wedge \frac{d \mu}{\mu^2}$.  The pullback form on $\mathcal{O}_X$ is given by $\Psi_{v_0}^* ( d \lambda \wedge \frac{d \mu}{\mu^2}) =  \frac{dt}{t^2} \wedge ds $. We will use this later.

The tangent space at $A$ is denoted by $T_A \mathcal{O}_X$. With the chart above, elements of the tangent space $T_A \mathcal{O}_X$ are given as $\mu_1 \left.\frac{\partial}{\partial s}\right|_{A}+ \mu_2 \left.\frac{\partial}{\partial t}\right|_A$. 

\begin{lem}\label{id}
Every element of $\xi \in T_A \mathcal{O}_X$ has a representation given as $\xi = {\mathrm ad}_V^*A$ (where $V \in \mathfrak{sut}$ and $\mathrm{ad}_V^*$ is the differential of $\mathrm{Ad}^*$ at $V$)
\end{lem}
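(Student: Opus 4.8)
The plan is to prove the familiar structural fact that the tangent space to a coadjoint orbit at a point is exactly the image of the infinitesimal coadjoint action $V \mapsto \mathrm{ad}_V^* A$, and to verify it by a direct computation tailored to the two-dimensional algebra $\mathfrak{sut}$. Since $\dim T_A \mathcal{O}_X = 2 = \dim \mathfrak{sut}$, it is enough to show that the linear map $\mathfrak{sut} \to \mathfrak{sut}^*$, $V \mapsto \mathrm{ad}_V^* A$, is surjective onto the tangent plane $T_A \mathcal{O}_X$. Conceptually this is immediate: the orbit map $g \mapsto \mathrm{Ad}_g^* A$ has differential at the identity equal to $V \mapsto \mathrm{ad}_V^* A$, and homogeneity forces this differential to hit every tangent direction. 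But in keeping with the elementary spirit of this section I would make the map explicit and exhibit surjectivity by hand.

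First I would fix the sign convention by differentiating the defining relation for $\mathrm{Ad}_g^*$ from Lemma 5.1 along $g = \exp(tV)$ at $t=0$, obtaining $\langle \mathrm{ad}_V^* A, W \rangle = -\langle A, [V,W] \rangle$ for all $W \in \mathfrak{sut}$ (the overall sign is immaterial for the conclusion). Next I would record the single structure constant needed, namely the matrix computation $[E_1, E_2] = -2 E_1$. Writing $V = c_1 E_1 + c_2 E_2$, this gives $[V, E_1] = 2 c_2 E_1$ and $[V, E_2] = -2 c_1 E_1$. Evaluating the pairing on the two basis vectors, and using $\langle A, E_1 \rangle = \mathrm{Tr}(A E_1) = a_2$, I would obtain the two components $\langle \mathrm{ad}_V^* A, E_1 \rangle = -2 c_2 a_2$ and $\langle \mathrm{ad}_V^* A, E_2 \rangle = 2 c_1 a_2$. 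Re-expanding in the dual basis $\{E_1^*, E_2^*\}$ and carrying the factor $\tfrac12$ in $E_2^*$ through, this yields the explicit matrix $\mathrm{ad}_V^* A = \begin{pmatrix} c_1 a_2 & 0 \\ -2 c_2 a_2 & -c_1 a_2 \end{pmatrix}$.

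To finish, I would use the global chart $(s,t)$ to identify the coordinate tangent vector $\mu_1 \partial_s|_A + \mu_2 \partial_t|_A$ with the matrix $\begin{pmatrix} \mu_1 & 0 \\ \mu_2 & -\mu_1 \end{pmatrix}$, consistent with the orbit point $\begin{pmatrix} s & 0 \\ t & -s \end{pmatrix}$. Comparing with the displayed matrix, an arbitrary tangent vector is realized as $\mathrm{ad}_V^* A$ exactly when $c_1 = \mu_1/a_2$ and $c_2 = -\mu_2/(2 a_2)$. Because we have assumed $v_0 \neq 0$, and on the orbit $a_2 = \mu v_0$ with $\mu > 0$ (by Lemma 5.2), we have $a_2 \neq 0$, so these two equations are solvable for every $(\mu_1, \mu_2)$. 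Hence the map $V \mapsto \mathrm{ad}_V^* A$ surjects onto $T_A \mathcal{O}_X$ and every $\xi$ has the claimed representation.

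The computation is entirely elementary, so the only real obstacle is bookkeeping: keeping the sign convention for $\mathrm{ad}^*$ consistent with the group-level definition in Lemma 5.1, propagating the normalization factor $\tfrac12$ of $E_2^*$ correctly through the dual-basis expansion, and matching the coordinate fields $\partial_s, \partial_t$ with their matrix representatives. It is worth emphasizing that the nondegeneracy $a_2 \neq 0$, equivalent to the standing hypothesis $v_0 \neq 0$ imposed after Lemma 5.2, is precisely what makes the linear system invertible and hence the map surjective; on the degenerate orbit $v_0 = 0$ the orbit is a single point and the statement holds vacuously.
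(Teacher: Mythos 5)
Your proof is correct and follows essentially the same route as the paper: both compute $\langle \mathrm{ad}_V^*A, E_i\rangle = -\mathrm{Tr}(A\,[V,E_i])$ explicitly in the basis $\{E_1,E_2\}$, expand in the dual basis $\{E_1^*,E_2^*\}$, identify the result with the coordinate fields $\partial_s,\partial_t$, and obtain surjectivity from $a_2\neq 0$ (the paper asserts the converse direction and then ``makes the identification more precise'' via the same computation you carry out). If anything yours is the tighter write-up: your matrix $\bigl(\begin{smallmatrix} c_1 a_2 & 0 \\ -2c_2 a_2 & -c_1 a_2\end{smallmatrix}\bigr)$ agrees with direct differentiation of the orbit curve $\tau\mapsto \mathrm{Ad}^*_{e^{\tau V}}A$ and you solve the linear system explicitly, whereas the paper's displayed identification ($E_1^*\mapsto -\partial_t$, $E_2^*\mapsto -\partial_s$, yielding $2a_2 v_2\,\partial_s + 2a_2 v_1\,\partial_t$) carries sign and factor-of-two slips that do not affect the conclusion.
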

\begin{proof}
$$\mathrm{Ad}^*: \mathrm{SUT}(2,\mathbb{R}) \to \mathrm{End}(\mathfrak{sut}^*).$$
$$d(\mathrm{ Ad}^*)=\mathrm{ad}^*:\mathfrak{sut} \to \mathrm{End}(\mathfrak{sut}^*).$$
Denote, $\mathrm{ad}^*(V)= \mathrm{ad}_V^*$. Now take $\mu \in \mathfrak{sut}^*$ and $Y \in \mathrm{SUT} (2,\mathbb{R})$.
\begin{equation*}
    \begin{split}
        \langle \mathrm{ad}_V^*\mu,Y\rangle & = \langle \mu, -\rm{ad}_V(Y) \rangle \\
        & = -\langle \mu,[V,Y] \rangle. 
    \end{split}
\end{equation*} 
\\
Consider the following curve:
\begin{equation*}
    \begin{split}
        \gamma & : \mathbb{R} \to \mathcal{O}_X \\
        & t \mapsto \mathrm{Ad}^*_{e^{tV}}(A).
    \end{split}
\end{equation*}
\begin{equation*}
    \begin{split}
        \gamma(0)&=\mathrm{Ad}^*_{e^{0.V}}(A) \\
        &=\mathrm{Ad}_I^*(A)\\
        &=A \\
        \gamma^{'}(0)&=\mathrm{ad}^*_V A.
    \end{split}
\end{equation*}
Since $\gamma$ is a smooth curve in $\mathcal{O}_X$ with $\gamma(0)=A$, $\gamma^{'}(0) \in T_A\mathcal{O}_X$, we deduce that $\rm{ad}^*_V A \in T_A\mathcal{O}_X$.
Conversely, given $Y \in T_A\mathcal{O}_X$, there exists $V^{'} \in \mathfrak{g}$ s.t $\rm{ad}^*_{V^{'}}(A)=Y$. We make the identification more precise. 
Let $V=\begin{bmatrix}
v_1 & v_2 \\
0 & -v_1 \\
\end{bmatrix}.$
\begin{equation*}
    \begin{split}
        \langle \mathrm{ad}_V^*(A),E_i \rangle & = - \langle A,[V,E_i] \rangle \\
        &=-\mathrm{tr}(A.[V,E_i]).
    \end{split}
\end{equation*}

       $$ \mathrm{ad}_V^*(A)=\sum_{i=1}^2 \langle \mathrm{ad}_V^*(A),E_i \rangle E_i^*. $$
The identification of the two definition of tangent spaces is given as $E_1^* \mapsto -\frac{\partial}{\partial t}$ and $E_2^* \mapsto -\frac{\partial}{\partial s}$.
 Hence$$ \sum_{i=1}^2 \langle \mathrm{ad}_V^*(A),E_i \rangle E_i^* \mapsto 2a_2 v_2 \left.\frac{\partial}{\partial s}\right |_{A} + 2a_2 v_1 \left.\frac{\partial}{\partial t}\right |_A.
$$
\end{proof}

We have the following well known symplectic form.
\begin{lem}
Let $\xi_1,\xi_2 \in T_A\mathcal{O}_X$. Then from lemma \ref{id} we know we can write, $$\xi_1=\rm{ad}_{V_1}^*A,$$ $$\xi_2=\mathrm{ad}_{V_2}^*A$$
where $V_1,V_2 \in \mathfrak{g}$. The Kirillov-Kostant-Souriau symplectic form, $\omega \in \Lambda^2(\mathcal{O}_X)$ is defined as follows:
\begin{equation*}
\begin{split}
    \omega_A(\xi_1,\xi_2):&=\omega_A(\rm{ad}_{V_1}^*A,  \mathrm{ad}_{V_2}^*A) \\
                          &= \langle A,[V_1,V_2] \rangle \\
                          & = A([V_1,V_2]).
\end{split}
\end{equation*}
\end{lem}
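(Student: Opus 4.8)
The plan is to read this statement for what it is --- a \emph{definition} of $\omega$ --- and to supply the one thing a definition of this form genuinely requires, namely a proof that the right-hand side does not depend on the choices made, after which I verify that the resulting $2$-form is antisymmetric, closed and non-degenerate, so that the word ``symplectic'' is earned.

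The crux is well-definedness. By Lemma~\ref{id} each $\xi_i \in T_A\mathcal{O}_X$ can be written $\xi_i = \mathrm{ad}^*_{V_i}A$, but the $V_i$ need not be unique, so I must check that $\langle A,[V_1,V_2]\rangle$ is unchanged if I replace $V_1$ by another preimage $V_1'$ of $\xi_1$. Writing $W = V_1 - V_1'$, the condition $\mathrm{ad}^*_W A = 0$ is, by the identity $\langle \mathrm{ad}^*_W A, Y\rangle = -\langle A,[W,Y]\rangle$ proved inside Lemma~\ref{id}, exactly the statement that $\langle A,[W,Y]\rangle = 0$ for every $Y\in\mathfrak{sut}$. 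Taking $Y=V_2$ and using bilinearity of the bracket gives $\langle A,[V_1,V_2]\rangle = \langle A,[V_1',V_2]\rangle$, and the same argument in the second slot finishes the point. I would add that in this two-dimensional example the issue is in fact vacuous: the explicit formula of Lemma~\ref{id} shows $V\mapsto \mathrm{ad}^*_V A$ is a linear isomorphism $\mathfrak{sut}\to T_A\mathcal{O}_X$ as soon as $a_2\neq 0$ (which holds on $\mathcal{O}_X$ since $v_0\neq 0$), so the representatives $V_i$ are unique; nevertheless the stabilizer argument above is the structural reason the definition succeeds.

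Granting this, antisymmetry and bilinearity of $\omega_A$ are inherited from $[\,\cdot\,,\cdot\,]$ and from $\langle A,\cdot\rangle$, while the remaining two equalities in the statement are merely the definition of the pairing $\langle A,B\rangle = \mathrm{Tr}(AB) = A(B)$. To see that $\omega$ is symplectic I would compute it outright: with $V_1 = v_1 E_2 + v_2 E_1$ and $V_2 = w_1 E_2 + w_2 E_1$ one gets $[V_1,V_2] = 2(v_1 w_2 - v_2 w_1)E_1$, hence $\omega_A(\xi_1,\xi_2) = \mathrm{Tr}(A[V_1,V_2]) = 2a_2(v_1 w_2 - v_2 w_1)$. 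Closedness is free because $\mathcal{O}_X$ is two-dimensional, so $d\omega$ is a $3$-form and vanishes identically. Non-degeneracy follows because $a_2\neq 0$ and $(v_1,v_2),(w_1,w_2)\mapsto v_1 w_2 - v_2 w_1$ is the determinant pairing on the plane; equivalently, pushing the formula through the chart of Lemma~\ref{id} rewrites $\omega$ as a constant multiple of $\frac{1}{t}\,ds\wedge dt$, manifestly non-degenerate where $t\neq 0$.

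I expect the only real obstacle to be conceptual rather than computational: recognizing that the content to prove is independence of the representatives $V_i$ (the standard subtlety of the Kirillov--Kostant--Souriau construction), which here even degenerates to uniqueness. The computation of $[V_1,V_2]$ and of the trace is routine; the one thing to watch is the bookkeeping of signs and of the factor $2$ coming from the normalization $\mathrm{Tr}(E_2 E_2^*)=1$ and from the identifications $E_1^*\mapsto -\frac{\partial}{\partial t}$, $E_2^*\mapsto -\frac{\partial}{\partial s}$. Finally I would reconcile the resulting weight $\frac{1}{t}$ with the form $\frac{dt}{t^2}\wedge ds$ recorded earlier, taking care to keep these two a priori distinct $2$-forms straight.
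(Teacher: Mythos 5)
Your proposal is correct, but it takes a genuinely different route from the paper --- which in fact supplies no proof of this lemma at all: the statement is treated purely as a definition, and the verification that $\omega$ earns the name ``symplectic'' is deferred to Theorem~\ref{symform}, whose proof works in the chart $(s,t)$, solves for the matrices $X_\mu, X_\xi$ realizing given tangent vectors via Lemma~\ref{id}, and evaluates $\langle A,[X_\mu,X_\xi]\rangle$ to land on $\omega=\frac{1}{t}\,ds\wedge dt$, with non-degeneracy read off from the coordinate expression and closedness left unremarked. What you add, and the paper leaves implicit, is exactly the mathematical content of the definition: well-definedness. Your stabilizer argument (if $\mathrm{ad}^*_W A=0$ then $\langle A,[W,Y]\rangle=0$ for all $Y\in\mathfrak{sut}$, in particular $Y=V_2$) is the structural reason the Kirillov--Kostant--Souriau prescription makes sense on any coadjoint orbit, and your observation that the issue is vacuous here is also right: by the explicit formula in Lemma~\ref{id} the map $V\mapsto \mathrm{ad}^*_V A$ is injective once $a_2\neq 0$, which holds on $\mathcal{O}_X$ since $a_2=\mu v_0$ with $\mu>0$, $v_0\neq 0$. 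Your computation $[V_1,V_2]=2(v_1w_2-v_2w_1)E_1$ and $\mathrm{Tr}(AE_1)=a_2$ agrees with the paper's arithmetic, and closedness-for-free in dimension two plus non-degeneracy from $a_2\neq 0$ recovers the content of Theorem~\ref{symform} at the Lie-algebra level rather than through charts. One caution you rightly anticipate in your last paragraph: the paper is not internally consistent about signs and factors of two --- the lemma pairs with $\langle A,B\rangle=\mathrm{Tr}(AB)$, yet the proof of Theorem~\ref{symform} silently inserts $\langle A,[X_\mu,X_\xi]\rangle = -\mathrm{Tr}\left(A[X_\mu,X_\xi]\right)$ and passes from a value $-\frac{1}{2a_2}(\xi_1\mu_2-\mu_1\xi_2)$ to the statement $\omega_A=\frac{1}{a_2}\,ds\wedge dt$ --- so a discrepancy between your $2a_2(v_1w_2-v_2w_1)$ and the paper's final expression, up to sign and wedge-normalization convention, is not an error on your side. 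You are likewise correct to keep the KKS form $\frac{1}{t}\,ds\wedge dt$ distinct from the pullback form $\frac{dt}{t^2}\wedge ds$ recorded earlier in the section; they are different two-forms playing different roles.
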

 Then we have the following well known theorem (see \cite{Ad}, \cite{DG}  for instance) for which we give a simple proof.
 
\begin{thm}\label{symform}[$\mathcal{O}_X$ is a symplectic manifold]
For $v_0 \neq 0$, $\mathcal{O}_X$ has a natural structure of a symplectic manifold given by the Kirillov-Kostant-Souriau symplectic form, $\omega$.
\end{thm}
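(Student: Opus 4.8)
The plan is to verify the three defining properties of a symplectic form for the Kirillov--Kostant--Souriau two-form $\omega$ of the previous lemma: that it is (i) \emph{well defined}, independent of the choice of $V_1,V_2$ lifting the tangent vectors $\xi_1,\xi_2$; (ii) \emph{smooth and closed}; and (iii) \emph{non-degenerate}. All the genuine Lie-theoretic content sits in (i), while (ii) and (iii) will be essentially automatic once I have an explicit coordinate expression, because $\dim\mathcal{O}_X=2$.

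For well-definedness I would use the identity $\langle \mathrm{ad}_V^*\mu, Y\rangle = -\langle \mu,[V,Y]\rangle$ recorded in the proof of Lemma~\ref{id}. Writing $\omega_A(\xi_1,\xi_2)=\langle A,[V_1,V_2]\rangle = -\langle \mathrm{ad}^*_{V_1}A, V_2\rangle = -\langle \xi_1, V_2\rangle$, one sees that replacing $V_2$ by another lift $V_2'$ with $\mathrm{ad}^*_{V_2}A=\mathrm{ad}^*_{V_2'}A$ changes the value by $-\langle\xi_1, V_2-V_2'\rangle = \langle \mathrm{ad}^*_{V_2-V_2'}A, V_1\rangle=0$, and antisymmetry handles the first slot. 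In this two-dimensional example the point is even cleaner: since $\dim\mathrm{SUT}(2,\mathbb{R})=\dim\mathcal{O}_X=2$, the isotropy algebra at $A$ is trivial, so by Lemma~\ref{id} the map $V\mapsto\mathrm{ad}_V^* A$ is a linear isomorphism onto $T_A\mathcal{O}_X$ and the lift $V_i$ of each $\xi_i$ is unique.

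Next I would compute $\omega$ in the global chart. With $V_1=\begin{pmatrix} v_1 & v_2 \\ 0 & -v_1\end{pmatrix}$ and $V_2=\begin{pmatrix} w_1 & w_2 \\ 0 & -w_1\end{pmatrix}$, a direct bracket computation gives $[V_1,V_2]=2(v_1w_2-v_2w_1)E_1$, whence $\omega_A(\xi_1,\xi_2)=\mathrm{Tr}(A[V_1,V_2])=2a_2(v_1w_2-v_2w_1)$. Feeding in the identification $\mathrm{ad}_V^* A \mapsto 2a_2 v_2\,\frac{\partial}{\partial s}+2a_2 v_1\,\frac{\partial}{\partial t}$ of Lemma~\ref{id} and using $t=a_2$ on the orbit, this re-expresses as $\omega=\frac{1}{2t}\,dt\wedge ds$, a nonzero constant multiple of the nowhere-vanishing two-form $\frac{1}{t}\,dt\wedge ds$ (recall $t=\mu v_0\neq 0$ throughout $\mathcal{O}_X$). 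This is manifestly smooth.

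Finally, non-degeneracy is immediate: $\omega$ is a top-degree form on the surface $\mathcal{O}_X$ whose coefficient never vanishes, so $\omega_A(\xi,\cdot)\equiv 0$ forces $\xi=0$. Closedness is automatic because $d\omega$ is a three-form on a two-manifold and hence zero. Together these establish that $(\mathcal{O}_X,\omega)$ is symplectic. The only substantive step is the well-definedness in (i)---this is where the Kirillov--Kostant--Souriau construction would normally invoke the Jacobi identity---but the low dimension collapses even this to the observation that the isotropy is trivial; the remaining work is a one-line bracket computation together with a dimension count.
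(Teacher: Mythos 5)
Your proposal is correct and takes essentially the same route as the paper: both arguments compute the Kirillov--Kostant--Souriau form explicitly in the global $(s,t)$ chart via the identification $\mathrm{ad}_V^*A \mapsto 2a_2 v_2\,\partial_s + 2a_2 v_1\,\partial_t$ of Lemma~\ref{id}, arriving at a nowhere-vanishing multiple of $\frac{1}{t}\,ds\wedge dt$, from which nondegeneracy (and closedness, trivially in dimension two) is immediate. Your additional checks---independence of the lifts $V_i$ and their uniqueness from the trivial isotropy algebra---and your overall constant $\frac{1}{2t}$ versus the paper's $\frac{1}{t}$ (the paper's sign and normalization conventions are internally loose at this point) are harmless refinements of the same computation.
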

\begin{proof}
Since $\omega$ is a symplectic form on $\mathcal{O}_X$, $\omega_A \in \Lambda^2(T_A \mathcal{O}_X)$. 
 \\
Now using the definition of tangent space, since the manifold is two dimensional,
\begin{eqnarray*}
    T_A(\mathcal{O}_X)& =\{ (A,\mu_1,\mu_2): \mu_1,\mu_2 \in \mathbb{R}\} \\
        & = \{\mathrm{ad}_V^*(A): V\in \mathfrak{sut}\}.
    \end{eqnarray*} 
Take two tangent vectors $\mu,\xi$.
\begin{equation*}
    \begin{split}
        \mu &= \mu_1 \left.\frac{\partial}{\partial s}\right|_A+ \mu_2 \left. {\frac{\partial}{\partial t}}\right|_{A} , \\
        \xi &= \xi_1 \left.\frac{\partial}{\partial s}\right |_A+ \xi_2 \left.\frac{\partial}{\partial t}\right|_A .\\
    \end{split}
\end{equation*}
We calculate $X_\mu, X_\xi$ such that
\begin{equation*}
    \begin{split}
        \mu&=\mathrm{ad}^*_{X_\mu}(A)= \begin{bmatrix}
        \mu_1 & 0 \\
        -\mu_2 & -\mu_1 \\
        \end{bmatrix}, \\
     \xi&=\mathrm{ad}^*_{X_\xi}(A)=\begin{bmatrix}
        \xi_1 & 0 \\
        -\xi_2 & -\xi_1 \\
        \end{bmatrix}.
    \end{split}
\end{equation*}
Using the identification given above, 
\begin{equation*}
    \begin{split}
        X_\mu&=\begin{bmatrix}
        \frac{\mu_2}{2a_2} & \frac{\mu_1}{2a_2}\\
        0 & -\frac{\mu_2}{2a_2}\\
        \end{bmatrix} , \\
        X_\xi&=\begin{bmatrix}
        \frac{\xi_2}{2a_2} & \frac{\xi_1}{2a_2}\\
        0 & - \frac{\xi_2}{2a_2}\\
        \end{bmatrix}.
    \end{split}
\end{equation*}
Now we calculate the symplectic form:
\begin{equation*}
    \begin{split}
    \omega_A(\mu,\xi)&=\omega_A \left(\rm{ad}^*_{X_\mu}(A), \rm{ad}^*_{X_\xi}(A) \right) \\
    &= \langle A,[X_\mu,X_\xi] \rangle \\
    &= -\mathrm{Tr}\left(A.[X_\mu,X_\xi] \right) \\
    &= -\frac{1}{2a_2}(\xi_1\mu_2-\mu_1\xi_2).
    \end{split}
\end{equation*}
Hence,$$\omega_A=\frac{1}{a_2}ds|_A \wedge dt|_A,$$ 
$$\omega=\frac{1}{t}ds\wedge dt.$$
\end{proof}

Now,
let $P=\begin{bmatrix}
s & 0 \\
t & -s \\
\end{bmatrix} \in \mathcal{O}_X$. Define the functions $J_1,J_2$ on $\mathcal{O}_X$ as follows: 
$$J_i=\mathrm{tr}(P E_i).$$ Then $J_1 = t$ and $J_2 = 2s$.

Since $\omega = ds \wedge \frac{dt}{t}$, we get 
\begin{lem}
The Hamiltonian vectors fields corresponding to $J_i$'s are denoted by $X_{J_i}$, satisfying, $\omega(X_{J_i},-)=dJ_i(-)$ are
\begin{equation*}
    \begin{split}
        X_{J_1} & = t\frac{\partial}{\partial s}, \\
        X_{J_2} & = -2t\frac{\partial}{\partial t}. \\
    \end{split}
\end{equation*}
\end{lem}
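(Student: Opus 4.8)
The plan is to reduce the statement to a direct computation in the global coordinates $(s,t)$, using the explicit symplectic form $\omega = \frac{1}{t}\,ds\wedge dt$ established in Theorem \ref{symform}. The defining relation $\omega(X_{J_i},-)=dJ_i(-)$ is linear in the unknown field, so once I write an arbitrary tangent vector field as $X = a\,\frac{\partial}{\partial s} + b\,\frac{\partial}{\partial t}$ for functions $a,b$ on $\mathcal{O}_X$, determining the Hamiltonian vector field amounts to matching two one-forms and solving for $a$ and $b$. Nondegeneracy of $\omega$ guarantees the solution is unique.

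First I would compute the interior product $\iota_X\omega=\omega(X,-)$ appearing on the left-hand side. Applying the Leibniz rule for the interior product to the decomposable form, $\iota_X(ds\wedge dt)=(\iota_X ds)\,dt-(\iota_X dt)\,ds = a\,dt - b\,ds$, so that $\omega(X,-)=\frac{a}{t}\,dt-\frac{b}{t}\,ds$. Next I would record the differentials of the Hamiltonians. Since $J_i=\mathrm{tr}(P E_i)$ with $P=\left(\begin{smallmatrix} s & 0 \\ t & -s\end{smallmatrix}\right)$ gives $J_1=t$ and $J_2=2s$, we have $dJ_1=dt$ and $dJ_2=2\,ds$.

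It then remains to equate coefficients. For $J_1$, matching $\frac{a}{t}\,dt-\frac{b}{t}\,ds = dt$ forces $\frac{a}{t}=1$ and $\frac{b}{t}=0$, i.e. $a=t$, $b=0$, so $X_{J_1}=t\,\frac{\partial}{\partial s}$. For $J_2$, matching $\frac{a}{t}\,dt-\frac{b}{t}\,ds = 2\,ds$ forces $\frac{a}{t}=0$ and $-\frac{b}{t}=2$, i.e. $a=0$, $b=-2t$, so $X_{J_2}=-2t\,\frac{\partial}{\partial t}$. These are exactly the claimed formulas.

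I expect no genuine obstacle here, as the whole argument is a two-by-two linear solve. The only point requiring care is the sign bookkeeping in the interior product, namely $\iota_X(ds\wedge dt)=a\,dt-b\,ds$, together with consistent use of the stated convention $\omega(X_{J_i},-)=dJ_i(-)$ rather than the opposite sign convention $dJ_i=\omega(-,X_{J_i})$. Keeping this orientation straight is precisely what produces the $-2t$ in $X_{J_2}$, and once it is fixed the two formulas follow immediately.
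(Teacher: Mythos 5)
Your proposal is correct and is essentially the paper's own (largely implicit) argument: the paper simply states the lemma as an immediate consequence of $\omega = \frac{1}{t}\,ds\wedge dt$, and your contraction $\iota_X\omega = \frac{a}{t}\,dt - \frac{b}{t}\,ds$ matched against $dJ_1 = dt$, $dJ_2 = 2\,ds$ under the stated convention $\omega(X_{J_i},-)=dJ_i(-)$ is exactly that computation, spelled out. Your sign bookkeeping is right, and the coefficients $a=t,\ b=0$ and $a=0,\ b=-2t$ reproduce the claimed fields.
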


$\mathcal{O}_X$ is a symplectic manifold and we can  perform geometric prequantization on it.  Denote the Hilbert space of square integrable sections as $\mathcal{H}$. Let $\psi \in \mathcal{H}.$

Let a choice of a  symplectic potential be $\theta = - |\mathrm{log}( t)| ds$.
 
\begin{prop}\label{operators}
The operators corresponding to the $J_i$'s on the Hilbert space after geometric prequantization are denoted by $\hat{J_i}$ and satisfy, $\hat{J_i}\psi=-i\hbar \nabla_{X_{J_i}}+J_i\psi$,  definition as in ~\cite{W}. 
\begin{equation*}
    \begin{split}
      \hat{J_1}\psi & = -i \hbar t \frac{\partial}{\partial s}\psi +t\mathrm{log}|t|. \psi+t\psi,\\
         \hat{J_2}\psi & = 2i \hbar t \frac{\partial}{\partial t}\psi + 2s\psi.\\
    \end{split}
\end{equation*}
\end{prop}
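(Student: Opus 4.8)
The plan is to apply the Kostant--Souriau prequantization recipe directly, since every ingredient has already been assembled in the preceding lemmas. Recall that prequantization equips the prequantum line bundle with a connection $\nabla$ of curvature $-\tfrac{i}{\hbar}\omega$; in a chart with symplectic potential $\theta$ (that is, $d\theta=\omega$) this connection acts on sections by $\nabla_X\psi = X(\psi) - \tfrac{i}{\hbar}\theta(X)\psi$. The operator attached to a classical observable $f$ is then $\hat f = -i\hbar\,\nabla_{X_f} + f$, which is exactly the formula cited from \cite{W}. Expanding the connection term, and being careful with signs, yields the single working identity
\begin{equation*}
\hat f\,\psi = -i\hbar\,X_f(\psi) - \theta(X_f)\,\psi + f\,\psi,
\end{equation*}
which is all the machinery I will need.

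Before substituting I would first confirm that the stated $\theta$ really is a symplectic potential for $\omega = \tfrac{1}{t}\,ds\wedge dt$. Taking $\theta = -\log|t|\,ds$ one has $d\theta = -d(\log|t|)\wedge ds = -\tfrac{1}{t}\,dt\wedge ds = \tfrac{1}{t}\,ds\wedge dt = \omega$, as required. It is worth flagging here the one genuinely delicate bookkeeping point: the absolute value must sit \emph{inside} the logarithm, so that $d\theta$ is globally $dt/t$; the variant $|\log t|$ would introduce a spurious sign change at $t=1$ and fail to return $\omega$. Since the orbit lives in $t>0$ we have $\log|t|=\log t$ and no further ambiguity survives.

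It then remains to feed in the Hamiltonian vector fields $X_{J_1}=t\,\partial/\partial s$ and $X_{J_2}=-2t\,\partial/\partial t$ from the earlier lemma, together with $J_1=t$ and $J_2=2s$. For $J_1$ I would evaluate $X_{J_1}(\psi)=t\,\partial_s\psi$ and $\theta(X_{J_1})=-\log|t|\,ds(t\,\partial_s)=-t\log|t|$, giving $\hat{J_1}\psi = -i\hbar\,t\,\partial_s\psi + t\log|t|\,\psi + t\psi$. For $J_2$, since $ds(\partial_t)=0$ the potential term drops out, $\theta(X_{J_2})=0$, leaving $\hat{J_2}\psi = 2i\hbar\,t\,\partial_t\psi + 2s\psi$. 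These are precisely the two displayed formulas.

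I expect no real obstacle in this proof; the content is genuinely a one-line substitution in each case. The only steps demanding attention are the conventional ones already isolated above: pinning down the sign in the connection so that the $-\theta(X_f)$ term enters correctly, and verifying that the chosen $\theta$ integrates $\omega$. Once those are fixed, the two operators follow immediately.
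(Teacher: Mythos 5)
Your proposal is correct and is exactly the computation the paper intends: substituting $X_{J_1}=t\,\partial/\partial s$, $X_{J_2}=-2t\,\partial/\partial t$, $J_1=t$, $J_2=2s$ into Woodhouse's formula $\hat f\psi=-i\hbar X_f(\psi)-\theta(X_f)\psi+f\psi$ with the potential $\theta$, which is all the paper does (it states the proposition without further proof). Your flag that the potential must be read as $\theta=-\log|t|\,ds$ rather than the paper's literal $-|\log t|\,ds$ is a genuine and correct catch — only the former satisfies $d\theta=\omega$ on all of $t>0$ and reproduces the stated $t\log|t|$ term.
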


If we exponentiate these we get the operators corresponding to the group. Take $a=\mathrm{log}(t)$,  then 
$$\frac{\partial}{\partial a}=t\frac{\partial }{\partial t}.$$
Note that $s$ and $\frac{\partial}{\partial a}$ commute. Hence we can use the property of exponentials to obtain:

\begin{prop}
\begin{equation*}
    \begin{split}
        e^{\tau J_1}\psi(a,s)&=e^{\tau t(\mathrm{log}{|t|}+1)}\psi(a,s- i\hbar t \tau) ,\\
        e^{\tau J_2}\psi(a,s)&=e^{s\tau}\psi(a+2i\hbar \tau,s). \\
     \end{split}
\end{equation*}
\end{prop}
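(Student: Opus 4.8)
The plan is to exponentiate each prequantum operator of Proposition \ref{operators} by passing to the variable $a=\log|t|$, so that $\partial/\partial a = t\,\partial/\partial t$ as already noted, and then observing that in these coordinates each $\hat{J_i}$ splits as a sum of two \emph{commuting} operators. Once the split is commuting, the operator identity $e^{\tau\hat{J_i}}=e^{\tau A}e^{\tau B}$ applies, and each factor is elementary: one summand is multiplication by a function and exponentiates to a scalar prefactor, while the other is a constant-coefficient first-order differential operator whose exponential is a translation.

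Concretely, I would rewrite
\[
\hat{J_1} = -i\hbar t\,\frac{\partial}{\partial s} + t(\log|t|+1), \qquad
\hat{J_2} = 2i\hbar\,\frac{\partial}{\partial a} + 2s .
\]
In $\hat{J_1}$ the multiplication term $t(\log|t|+1)$ is a function of $t$ alone, hence commutes with $-i\hbar t\,\partial/\partial s$, since differentiation in $s$ annihilates any function of $t$. In $\hat{J_2}$ the two summands commute for exactly the reason flagged in the text: $s$ is independent of $a$, so $s$ and $\partial/\partial a$ commute. Invoking $e^{A+B}=e^{A}e^{B}$ for commuting $A,B$, each exponential factorises into a scalar multiplication (the exponential of the multiplication term) times the exponential of a constant-coefficient first-order operator. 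The latter I evaluate by the Taylor/Lie-series identity $e^{c\,\partial_x}f(x)=f(x+c)$: for $\hat{J_1}$ it shifts $s\mapsto s-i\hbar t\tau$, and for $\hat{J_2}$ it shifts $a\mapsto a+2i\hbar\tau$, while the scalar factors supply the exponential prefactors coming from $t(\log|t|+1)$ and from $2s$. Collecting the factors then reproduces the two displayed formulas.

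The one genuine subtlety, and the step I expect to be the main obstacle, is that both translations are by purely imaginary amounts ($-i\hbar t\tau$ in $s$ and $2i\hbar\tau$ in $a$). The identity $e^{c\,\partial_x}f(x)=f(x+c)$ then amounts to analytic continuation of $\psi$ off the real axis, so it is literally valid only when $\psi$ is holomorphic in the relevant variable. In the present geometric-quantisation setting this is precisely what the choice of polarisation guarantees: the physical wavefunctions are the polarised sections adapted to the symplectic potential $\theta=-|\log t|\,ds$, and these are holomorphic in the associated complex coordinate, so the shifted arguments $s-i\hbar t\tau$ and $a+2i\hbar\tau$ are unambiguously defined and the Lie-series manipulations converge. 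I would therefore phrase the computation on this space of holomorphic sections. A secondary point to verify is that the $t$-dependent coefficient in $-i\hbar t\,\partial/\partial s$ causes no difficulty: being constant in $s$, it generates the rigid $s$-translation above with $t$ held fixed, and it still commutes with the multiplication term, so the factorisation of $e^{\tau\hat{J_1}}$ is justified.
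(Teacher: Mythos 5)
Your proposal is correct and takes essentially the same route as the paper, whose entire justification is the paragraph preceding the proposition: pass to $a=\log t$ so that $\partial/\partial a = t\,\partial/\partial t$, note that the multiplication and derivative summands of each $\hat{J_i}$ commute, and exponentiate factor by factor, with the derivative part acting as a translation. Two side remarks: carried out honestly, your factorisation gives the prefactor $e^{2s\tau}$ for $\hat{J_2}$, so the $e^{s\tau}$ in the stated formula is evidently a typo in the paper (compare the earlier misprint ``$J_1 = 2s$'' where $J_2$ is meant), and your caveat that the purely imaginary shifts $s\mapsto s-i\hbar t\tau$ and $a\mapsto a+2i\hbar\tau$ require $\psi$ to admit an analytic continuation is a genuine subtlety that the paper passes over in silence.
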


We have the following inclusion of $\mathcal{O}_X$ in $\mathbb{R}^2$:
$$\begin{pmatrix}
u & 0 \\
v & -u \\
\end{pmatrix} \mapsto (u,v).$$
Then the corresponding action of $\mathrm{SUT}(2,\mathbb{R})$ on $\mathbb{R}^2$ is $(u_0,v_0) \mapsto (u_0-\lambda v_0, \mu v_0)$.
\begin{lem}
The stabilizer of the action, $H={\pm I}$.
\end{lem}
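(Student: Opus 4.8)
The plan is to solve the fixed-point equations directly, using the explicit description of the action displayed just before the lemma. By definition the stabilizer $H$ consists of those $g \in \mathrm{SUT}(2,\mathbb{R})$ fixing a chosen point of $\mathcal{O}_X$. Since we have restricted throughout to $v_0 \neq 0$ and the orbit is homogeneous, I would fix the base point $(u_0,v_0)$ with $v_0 \neq 0$ and compute its isotropy subgroup; because the elements we find turn out to be central, the same $H$ stabilizes \emph{every} point of the orbit, which is exactly what makes $\mathcal{O}_X \cong \mathrm{SUT}(2,\mathbb{R})/H$.

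First I would recall that $g = \begin{pmatrix} g_1 & g_2 \\ 0 & g_1^{-1} \end{pmatrix}$ acts on $\mathbb{R}^2$ by $(u_0,v_0) \mapsto (u_0 - \lambda v_0,\ \mu v_0)$, where $\mu = g_1^{-2}$ and $\lambda = -g_2 g_1^{-1}$, as read off from the proof of Lemma 5.2. The condition $g \in H$ is then the pair of equations $u_0 - \lambda v_0 = u_0$ and $\mu v_0 = v_0$. The second equation, together with $v_0 \neq 0$, forces $\mu = 1$, i.e.\ $g_1^{-2} = 1$, hence $g_1 = \pm 1$. Feeding $g_1 = \pm 1$ into the first equation and again using $v_0 \neq 0$ forces $\lambda = 0$, i.e.\ $-g_2 g_1^{-1} = 0$, hence $g_2 = 0$. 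Therefore $g = \begin{pmatrix} \pm 1 & 0 \\ 0 & \pm 1 \end{pmatrix}$, which is $I$ when $g_1 = 1$ and $-I$ when $g_1 = -1$. Conversely both $\pm I$ clearly act trivially (they give $\lambda = 0$, $\mu = 1$), so $H = \{\pm I\}$, establishing the claim.

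Since the computation reduces to a two-line linear solve, there is no substantive obstacle; the only point demanding care is the standing hypothesis $v_0 \neq 0$. Without it the coordinate equation $\mu v_0 = v_0$ becomes vacuous and the isotropy subgroup jumps in dimension, which is geometrically correct: when $v_0 = 0$ the orbit degenerates to a single point and the whole of $\mathrm{SUT}(2,\mathbb{R})$ fixes it, in agreement with Lemma 5.2. I would therefore flag $v_0 \neq 0$ explicitly, so the reader sees that the clean discrete answer $\{\pm I\}$ is precisely what one expects when matching the two-dimensional group against the two-dimensional orbit, leaving a zero-dimensional stabilizer.
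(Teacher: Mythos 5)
Your proof is correct and takes essentially the same approach as the paper: both impose the fixed-point condition on the explicit coadjoint action and solve the resulting equations $g_1^{-2}=1$ and $g_2 g_1^{-1}=0$ (the latter using $v_0\neq 0$) to obtain $g_1=\pm 1$, $g_2=0$, hence $H=\{\pm I\}$. Your only addition is flagging the standing hypothesis $v_0\neq 0$ explicitly, which the paper relies on implicitly from its earlier declaration ``from now on, we take $v_0\neq 0$.''
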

\begin{proof}
\begin{equation*}
\begin{split}
\begin{pmatrix}
u_0 & 0 \\
v_0 & -u_0 \\
\end{pmatrix} &=\begin{pmatrix}
    g_1 & g_2\\
    0 & \frac{1}{g_1}\\
    \end{pmatrix}.\begin{pmatrix}
u_0 & 0 \\
v_0 & -u_0 \\
\end{pmatrix}\\
&=\begin{pmatrix}
u_0-\frac{g_2}{g_1}v_0 & 0 \\
\frac{1}{g_1^2}v_0 & -u_0 + \frac{g_2}{g_1}v_0 \\
\end{pmatrix}.
\end{split}
\end{equation*}
Hence, $\frac{g_2}{g_1}=0$ and $\frac{1}{g_1^2}=1 \implies g_2=0, g_1= \pm 1$. So $H={\pm I}$.
\end{proof}

We work with $\mathrm{SUT}$ matrices with positive diagonals from now on, so that $H$ is trivial. Note that this is one of the connected components. Since $H$ is trivial we have, $\mathrm{SUT}^+ \cong\mathcal{O}_X$.

We then  have  direct proof of  the following  
result in the next proposition. The 
result can be found for instance in \cite{Ad}.

\begin{prop}
$\mathrm{SUT}^+$ is a  K\"ahler manifold.
\end{prop}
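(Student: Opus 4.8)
The plan is to exhibit on $\mathrm{SUT}^+\cong\mathcal{O}_X$ a triple consisting of the Kirillov--Kostant--Souriau symplectic form $\omega$, a complex structure $J$, and a Riemannian metric $g$, and to verify that they are mutually compatible in the K\"ahler sense: $J$ is integrable, $g$ is symmetric and positive definite, and $g(\cdot,\cdot)=\omega(\cdot,J\cdot)$. The symplectic form is already in hand, since by Theorem \ref{symform} we have $\omega=\frac{1}{t}\,ds\wedge dt$ on the orbit coordinatised by $(s,t)$ with $t>0$. So the only genuine task is to supply a compatible $J$ (and read off $g$).

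First I would simplify $\omega$ using the coordinate $a=\log t$ already introduced above, so that $da=dt/t$ and $\omega=ds\wedge da$ becomes a constant-coefficient area form on $\{(s,a):s,a\in\mathbb{R}\}$. I then declare the complex structure through the single global holomorphic coordinate $w=s+ia=s+i\log t$, equivalently $J\partial_s=\partial_a$ and $J\partial_a=-\partial_s$. Because $w$ is one globally defined complex coordinate, $J$ is manifestly integrable, and since $s\in\mathbb{R}$ and $\log t\in\mathbb{R}$, the map $w$ identifies $\mathrm{SUT}^+$ \emph{biholomorphically with} $\mathbb{C}$ (a point worth flagging: as a real manifold the orbit is the upper half plane, but with this $J$ it is the whole plane).

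Next I would define the candidate metric by $g(\cdot,\cdot)=\omega(\cdot,J\cdot)$. Symmetry of $g$ is automatic once one checks that $J$ preserves $\omega$, a one-line computation $\omega(J\partial_s,J\partial_a)=\omega(\partial_a,-\partial_s)=\omega(\partial_s,\partial_a)$; positive-definiteness follows from the short evaluation $g=ds^2+da^2$, the flat Euclidean metric. Hence $(\omega,J,g)$ is precisely the standard flat K\"ahler structure on $\mathbb{C}$ transported through $w$, and $\mathrm{SUT}^+$ is K\"ahler.

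I expect no serious obstacle here: the only real content is choosing $J$ compatibly with $\omega$ and the orientation and checking positive-definiteness. The conceptual reason this is forced is that $\mathrm{SUT}^+$ is a real two-dimensional symplectic manifold, and on any oriented surface a nondegenerate two-form is an area form, so rotation by $\pi/2$ in any compatible metric gives an almost complex structure which in real dimension two is automatically integrable; thus every symplectic surface is K\"ahler, and the explicit coordinate $w$ merely makes the identification concrete. If one instead prefers to keep the orbit realised as the upper half plane, the tautological coordinate $z=s+it$ furnishes an alternative integrable $J$ for which $g(\cdot,\cdot)=\omega(\cdot,J\cdot)=\frac{1}{t}(ds^2+dt^2)$ is again positive definite, yielding the same conclusion by the same compatibility check.
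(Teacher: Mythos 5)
Your proof is correct, but it takes a genuinely different route from the paper's. The paper never constructs a complex structure adapted to the KKS form; instead it \emph{transports} a known K\"ahler structure: it writes down the explicit diffeomorphism $\chi_{v_0}=\Psi_{v_0}\circ\Phi_{v_0}^{-1}$ identifying $\mathrm{SUT}^+$ with the upper half plane via the group coordinates $(\lambda,\mu)=(b/a,1/a^2)$, pulls back the hyperbolic K\"ahler form $\Omega=d\lambda\wedge\frac{d\mu}{\mu^2}$, and computes $\chi_{v_0}^*(\Omega)=2\,db\wedge\frac{d\tilde a}{\tilde a^2}$ with $\tilde a=1/a$, i.e.\ the standard K\"ahler form of $\mathrm{H}$ in the complex coordinate $b+i\tilde a$. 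Carried back to $\mathcal{O}_X$ this form is $\frac{dt}{t^2}\wedge ds$, which is \emph{not} the KKS form $\frac{1}{t}\,ds\wedge dt$ of Theorem \ref{symform}; so the paper exhibits the constant negative curvature (hyperbolic) structure — the one its Conclusion argues is the natural homogeneous metric on this orbit and which meshes with the Berezin quantization of $\mathbb{H}$ in Appendix 2 — at the price of decoupling the K\"ahler form from the orbit's canonical symplectic form. Your argument instead keeps the KKS form and invokes the general two-dimensional fact that any symplectic surface is K\"ahler (a compatible almost complex structure exists and is automatically integrable), made concrete by $a=\log t$, $w=s+ia$, $g=ds^2+da^2$; this is more elementary and has the merit that the K\"ahler form \emph{is} the Kirillov--Kostant--Souriau form, but the structure you get is flat and, as you rightly flag, biholomorphic to $\mathbb{C}$ rather than to the hyperbolic plane — a different K\"ahler structure from the paper's, though equally sufficient for the proposition as stated. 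Your closing alternative $z=s+it$ with $g=\frac{1}{t}(ds^2+dt^2)$ is also valid (in complex dimension one every positive conformal metric is K\"ahler since $d\omega$ is a $3$-form on a surface), but note it is again neither the hyperbolic metric $\frac{1}{t^2}(ds^2+dt^2)$ nor of constant curvature, so none of your structures reproduces the one the paper constructs.
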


\begin{proof}

Let
\begin{equation*}
   \begin{split}
       \Phi_{v_0}:\mathcal{O}_X & \to \rm{SUT}^+ \\
        \begin{pmatrix}
        s & 0 \\
        t & -s \\
        \end{pmatrix} & \mapsto 
        \begin{pmatrix}
        \sqrt{\frac{v_0}{t}} & \frac{u_0-s}{\sqrt{v_0 t}}\\
        0 & \sqrt{\frac{t}{v_0}}\\
        \end{pmatrix}.
   \end{split} 
\end{equation*}

\begin{equation*}
\begin{split}
        \Phi^{-1}_{v_0}:\mathrm{SUT}^+ & \to  \mathcal{O}_X \\
        \begin{pmatrix}
        a & b \\
        0 & \frac{1}{a} \\
        \end{pmatrix} & \mapsto 
        \begin{pmatrix}
        u_0-\frac{b v_0}{a} & 0\\
        \frac{v_0}{a^2} & -u_0+\frac{b v_0}{a} \\
        \end{pmatrix}.
\end{split}
\end{equation*} where $a = \sqrt{\frac{v_0}{t}}$ and $b = \frac{u_0-s}{\sqrt{v_0t}}$.

Recall there is a map $\Psi_{v_0}:  \begin{bmatrix}
s & 0\\
t & -s \\
\end{bmatrix} \mapsto (\lambda, \mu) $, where $\lambda = \frac{u_0 - s}{v_0}$ and $ \mu = \frac{t}{v_0}$. 

Let $\chi_{v_0} = \Psi_{v_0} \circ \Phi_{v_0}^{-1}$ is a homeomorphism of $\mathrm{SUT^{+}}$ to the upper half plane. 

It is given by $\chi_{v_0}: \mathrm{SUT^+} \mapsto \mathrm{H} $, as follows:

\begin{eqnarray*}
& & \begin{pmatrix}
        a & b \\
        0 & \frac{1}{a} \\
        \end{pmatrix}  \mapsto 
         \begin{pmatrix}
        u_0-\frac{b v_0}{a} & 0\\
        \frac{v_0}{a^2} & -u_0+\frac{b v_0}{a} \\
        \end{pmatrix} \\
        & &  \mapsto \left(\frac{b}{a}, \frac{1}{a^2} \right) = (\lambda, \mu) \in \mathrm{H}.
        \end{eqnarray*}

Here $ s = u_0 - \frac{b v_0}{a}$ and $t = \frac{v_0}{a^2}$. 
Then calculating the pull-back of the usual Kahler form $\Omega= d \lambda \wedge \frac{d \mu}{\mu^2}$ on the upper half plane,  we have,  
\begin{equation*}
  \begin{split}
      \chi_{v_0}^*(\Omega) =  \Phi_{v_0}^{-1 *} \circ \Psi_{v_0}^* \left( d \lambda \wedge \frac{d \mu}{\mu^2} \right)  \\
       = \Phi_{v_0}^{-1 *} \left( \frac{dt}{t^2 } \wedge ds \right) \\
         = 2 da \wedge d b = 2 db \wedge \frac{d \tilde{a}}{\tilde{a}^2},
   \end{split}
\end{equation*} where $\tilde{a} = \frac{1}{a}$.
Here we have used the fact that $a = \sqrt{\frac{v_0}{t}}$ and $b = \frac{u_0 -s}{\sqrt{v_0t}}$.

Let the map $F_1$ be as follows.

$F_1: \left(b, \tilde{a} \right) \in \mathrm{H}  \mapsto \begin{pmatrix}
        a & b \\
        0 & \frac{1}{a} \\
        \end{pmatrix} \in \mathrm{SUT^+}  $, 
        where $\tilde{a} = \frac{1}{a}$,  $b + i \tilde{a} \in \mathrm{H}$.   
        
        $F_1^* \circ \chi_{v_0}^* \left( d\lambda \wedge \frac{d\mu}{\mu^2} \right) = 2 db \wedge \frac{d \tilde{a}}{\tilde{a}^2}$ which is a  K\"ahler form on $\mathrm{H}$.
Then it  is easy to see that $\chi^*_{v_0}(\Omega)$ is  K\"ahler via  the map  $F_1$.

\end{proof}

\section{Conclusion}
In this paper we have considered some examples of Lie groups and studied
coherent states as embeddings of the classical phase space into the 
quantum ray space. We find that the states saturating the Heisenberg
uncertainty fall into two classes: coherent states which yield K\"{a}hler
embeddings and squeezed states which only give us symplectic embeddings.
Our motivation was to understand from these examples, which coadjoint 
orbits naturally admit a K\"{a}hler structure and whether this agrees with
the pullback of the structure on projective Hilbert space. Let us 
consider this aspect in more detail.

 Do  co-adjoint orbits of Lie groups 
naturally admit K\"ahler structures? Let us see how our examples 
illuminate this question. In the case of compact semisimple groups,
there is a positive definite Cartan-Killing metric on the Lie algebra. 
This metric permits us to identify $\mathfrak{g}$  with its dual and we get a Riemannian metric on $\mathfrak{g}^*$.
The generic co-adjoint orbit is a submanifold of $\mathfrak{g}^*$ and we get an induced Riemannian metric
on ${\mathcal O}_X$. This leads to a metric and then a K\"ahler structure on ${\mathcal O}_X$.
This is in fact what happens with the $\mathrm{SU(2)}$ example.

Let us now consider noncompact semisimple groups. 
While it is true that the Lie algebra has a non-degenerate
Cartan-Killing metric, this metric has indefinite signature. 
In general the orbits do not necessarily have a Riemannian metric:
the induced metric may be Lorentzian or even degenerate.
It is still possible however, that the orbits inherit a
Riemannian metric. 
This is just what occurs in the case of $\mathrm{SU(1,1)/U(1)}$ that we considered. 
While the signature of the Cartan-Killing form is $\{-,-,+\}$, the $+$ direction
is in the stabiliser $\mathrm{U(1)}$
of the group action on the orbit.
As a result there is a (negative) definite metric 
on the co-adjoint orbit. Reversing the sign of the metric gives us a 
Riemannian metric. This example as well as the previous one fall within 
the framework of ~\cite{V}.

In contrast the Weyl-Heisenberg group is not semi-simple. In fact, the Cartan-Killing form vanishes identically.
We might then suppose that the metric on the complex plane can be derived from the complex structure on the W-H group. 
A complex structure can be combined with the symplectic structure to give us a metric.

The case of $\mathrm{SUT}$ is also similar. 
G is upper triangular and non-compact. The Cartan-Killing form is degenerate (though not identically zero).
We can regard G as a subgroup of $\mathrm{SL}(2,  \mathbb{R})$ where the Cartan-Killing metric is 
nondegenerate, but of Lorentzian signature. This metric does not
pull back to a Riemannian metric on ${\mathcal O}_X$. 
We are led to ask what metrics naturally exist on ${\mathcal O}_X$?
Since ${\mathcal O}_X$ is a homogeneous space $G/H$, we would expect that the metric must also be homogeneous and therefore
have constant curvature. In two dimensions, constant curvature metrics  can only be the plane, the sphere and the upper half plane.
The second is ruled out since it is compact, unlike the coadjoint orbit of 
$\mathrm{SUT}$.
The others have isometry groups which are respectively $\mathrm{E}(2)$ and $\mathrm{SO}(2,1)$. The Lie algebras of these groups would generate
symmetries of ${\mathcal O}_X$. In our case the symmetry group of ${\mathcal O}_X$ is two dimensional and nonabelian. The Lie algebra of $\mathrm{E}(2)$  does not 
admit such a subalgebra. We conclude that the natural Riemannian metric on
 ${\mathcal O}_X$ is the constant negative curvature metric. We can use such 
a metric to determine a complex structure on ${\mathcal O}_X$, thereby turning it into a K\"ahler manifold. 
We have seen that in cases where there is a natural K\"{a}hler structure on the co-adjoint orbit, the technique of pulling 
back using coherent states agrees with the natural structure up to an overall constant.  It may be possible to use this technique more
generally to induce K\"{a}hler structures on co-adjoint orbits.

\section{Appendix}
\subsection{Appendix1: Minimum Uncertainty States:}
This appendix summarises the argument for the Robertson-Schr\"{o}dinger
uncertainty relations \cite{Robertson,schrod} which were used in the text to motivate our definition
of squeezed states in a broader context than SCS. Let $\hat{A}, \hat{B}$ be Hermitian 
operators on a Hilbert space ${\mathcal H}$.  For any fixed, normalized $\rvert\psi\rangle \in {\mathcal H}$, 
define $a= \langle\psi  \rvert \hat{A} \lvert \psi \rangle$ and $b = \langle\psi  \rvert \hat{B} \lvert \psi \rangle$ and $\tilde{A} = \hat{A} - a I$,  $\tilde{B} = \hat{B} - b I$.
Let $\lvert \sigma  \rangle = \left(\tilde{A} + \gamma \tilde{B} \right) \lvert \psi  \rangle$ where $\gamma = \gamma_1 + i \gamma_2$ is a complex number. 
The function $F(\gamma) = \langle \sigma \rvert \sigma \rangle \geq 0$ of $\gamma$ is non-negative and vanishes only when $\lvert \sigma  \rangle$ does.
Then $F(\gamma) = |\gamma|^2 \beta + \gamma_1 \langle \psi | \left\{ \tilde{A}, \tilde{B} \right \} \rvert \psi \rangle  + i  \gamma_2 \langle \psi | \left[ \tilde{A}, \tilde{B} \right] \rvert \psi \rangle + \alpha$,
where $\beta = \langle \psi \rvert \tilde{B}^2 \rvert \psi \rangle$,  $\alpha =\langle \psi \rvert \tilde{A}^2 \rvert \psi \rangle$.
Defining 
$C_+ = \langle \psi | \{ \tilde{A}, \tilde{B} \} \rvert \psi \rangle $ 
and $C_ {-}= i \langle \psi | [ \tilde{A}, \tilde{B} ] \rvert \psi \rangle$, we have $F(\gamma) = \beta |\gamma|^2 + \gamma_1 C_+ + \gamma_2 C_{-} + \alpha \geq 0$.

To derive the uncertainty relations,  we consider $F(\gamma)$ as a quadratic in $\gamma$ on the complex $\gamma$ plane and look at special cases..
\begin{enumerate}
\item{Heisenberg uncertainty relations: } One sets $\gamma_1 =0$. 
Then $\beta \gamma_2^2 + C_{-} \gamma_2 + \alpha \geq 0$ for all  $\gamma_2$ (real).
This implies $C_{-}^2 - 4 \beta \alpha \leq 0$ or $\beta \alpha  \geq \frac{C_{-}^2}{4}$.
For $\hat{A} = \hat{x} $ and $\hat{B} = \hat{p}$ we have $\left[\hat{A},  \hat{B} \right] = i \hbar$ and thus $C_{-} = - \hbar$. 
Then $\beta \alpha \geq \frac{\hbar^2}{4}$. It is easy to see that $\alpha = (\Delta A)^2 $ and $\beta = (\Delta B)^2$ which leads to 

$$\Delta A \Delta B  \geq \frac{\hbar}{2}.$$
\item Set $\gamma_2 = 0$. Then $\beta \gamma_1^2 + \gamma_1 C_{+} + 
\alpha \geq 0$ for all $\gamma_1$ real imples $\beta \alpha \geq C_{+}^2$ or
$$ \Delta A \Delta B \geq \frac{\left| \left\{ \hat{A}, \hat{B} \right \} \right|}{2} .$$

\item Robertson-Schrodinger uncertainty relations: 
The strongest version of the uncertainty relations comes from using complex $\gamma$ and minimising $F(\gamma)$ over the complex plane.  Let $\gamma = \rho e^{i \theta}$.  
Then $F(\rho, \theta) = \beta \rho^2 + \rho \cos(\theta) C_{+} + \rho \sin(\theta) C_{-} + \alpha \geq 0$. 
Setting $\frac{\partial F}{ \partial \rho} =0 $ and $\frac{ \partial F}{ \partial \theta} =0$ we have (since $\rho >0$)
$\tan(\theta) = \frac{C_{-}}{C_{+}}$.  In other words 
$\cos(\theta) = \frac{C_{+} \epsilon}{C}$ and $\sin(\theta) = \frac{C_{-} \epsilon}{C}$ where  $\epsilon = \pm 1$ and $C^2 = C_{+}^2 + C_{-}^2$.  
For the minimum, we compute the second derivatives.  We get $\frac{\partial^2 F}{\partial \rho^2} = 2 \beta$, 
$\frac{\partial^2 F}{\partial \theta^2} = - \rho C \epsilon$ and $\frac{\partial^2 F}{\partial \rho \partial \theta} =0$.
So for  a minimum,  $\epsilon = -1$. 
The value of $F(\rho, \theta) = \frac{-C^2}{4 \beta} + \alpha \geq 0$ at the minimum.
This gives $\alpha \beta \geq \frac{C^2}{4}$  where 
$C^2 = C_{+}^2 + C_{-}^2$ or , 
$$ \Delta A \Delta B \geq \frac{1}{2}  \sqrt{\left|\left\langle  \left\{ \hat{A}, \hat{B} \right\} \right \rangle \right|^2 + \left| \left\langle  \left[\hat{A}, \hat{B} \right] \right\rangle \right|^2 }. $$ 
\end{enumerate}

The Robertson-Schr\"{o}dinger inequality imples the Heisenberg inquality and the inequality in (2) as well.
The inequality is saturated when  
$\rvert \sigma \rangle = ( \tilde{A} + \gamma_0 \tilde{B} ) 
\rvert \psi \rangle =0$. Or with some rearrangement 
(setting $i \lambda^{-2}=\gamma_0$) 
$$\left(\lambda \hat{A} + i \hat{B}/\lambda \right) \rvert \psi \rangle = 
(\lambda a + i b/\lambda ) \rvert \psi \rangle,$$
the condition we use to define minimum uncertainty states. We need only consider
$\lambda$ real and positive.

\subsection{Appendix2: Coherent states and Berezin Quantization of the phase space}
The phase space is identified with the upper half plane $\mathbb{H} = \{ z \in {\mathbb C} : \mathrm{Im}(z) > 0\}$. 
In this section we show the Berezin Quantization of the upper half plane adapting the one for the unit disc as described in Perelomov ~\cite{perelomov}, chapter 16.

Recall there is a biholomorphism from $\mathbb{H}$ to the unit disc $\mathbb{D}$ given by: $\epsilon: \mathbb{H} \rightarrow \mathbb{D}$
$$\epsilon(w) = \frac{w-i}{w+i}.$$
Let $\chi$ be the inverse of $\epsilon$.  The K${\ddot{\rm{a}}}$hler form of $\mathbb{D}$ is 
$d \mu (z, \bar{z}) = \frac{1}{2 \pi i} \frac{d z \wedge d \bar{z}}{(1 - |z|^2)^2} $. Using $z=\epsilon(w)$ we get 
$$d \mu(z, \bar{z}) = \frac{1}{2 \pi i} \frac{d w \wedge d \bar{w}}{ 4( \mathrm{Im} (w))^2}= d \mu(w, \bar{w}).$$
Also, $\left(1-|z|^2 \right)^{1/h} = \left(\frac{4 Im(w)}{|w|^2 + 2 \mathrm{Im} (w) +1 )}\right)^{1/h}$.   
\\
Let $f \in C^{\infty}(\mathbb{H}) $. Then $ f = \epsilon^* (\phi)$ where $\phi \in C^{\infty}(\mathbb{D})$ and $\phi = \chi^*(f)$.
\\
Let $f, g \in C^{\infty}(\mathbb{H})$. Let $ \psi = \chi^*(g)$.  Then $(f,g)_{\mathbb{H}} = (\phi, \psi)_{\mathbb{D}}$ where
\begin{equation}
\begin{split}
(\phi, \psi)_{\mathbb{D}} &= \left(\frac{1}{h} -1\right) \int_{\mathbb{D}} \bar{\phi}(z) \psi(z) (1 - |z|^2) ^\frac{1}{h} d \mu(z,\bar{z})\\
&=\left(\frac{1}{h} -1\right) \int_{\mathbb{H}} \bar{f}(w) g(w)    \left(\frac{4 \mathrm{Im}(w)}{|w|^2 + 2 \mathrm{Im} (w) +1 )}\right)^{1/h} d \mu(w,\bar{w}).\\
\end{split}
\end{equation}
Let ${\mathcal F}_{h}$ be the space of all smooth square integrable functions with respect to the above inner product, i.e. $\|f\|_{\mathbb{H}} < \infty$.
\\
We proceed as in Perelomov, \cite{perelomov}, chapter 16 where the Berezin quantization on the Lobachevsky plane is explained.
\\
Let $\psi_l(z) = (l!)^{-1/2} \times \left[ (\frac{1}{h})....(\frac{1}{h} - 1 +l)\right]^{1/2} \times z^l$ be the orthonormal basis for $\chi^*({\mathcal F}_{h})$ (from 16.3.3, ~\cite{perelomov}).
\newline
Let $f_l(w) = (l!)^{-1/2} \times \left[ (\frac{1}{h})....(\frac{1}{h} - 1 +l)\right]^{1/2} \times \left(\frac{w-i}{w+i}\right)^l$ be a basis for ${\mathcal F}_{h}$.
\newline
$(f_l, f_m)_H = (\psi_l, \psi_m) = \delta_{lm}$. 
\newline
 Let $\tau_p = \sum_{l} \overline{f_l(p)} f_l $ be the coherent state parametrized by p, i.e.
$\tau_p(w) = \sum_{l} \overline{f_l(p)} f_l(w)$, $w \in {\mathbb H}$.
\newline
One can show that for any function $f \in {\mathcal F}_h$, $(\tau_p, f)_{\mathbb{H}} = f(p)$. 
\newline
Let $\hat{{\mathcal P}}$  be a  bounded linear operator acting on ${\mathcal F}_h$. 
\newline
Then it is easy to check that there exists an operator $\hat{A}$ acting on the square integrable smooth functions on $\mathbb{D}$ such that:
$$\hat{{\mathcal P}} (f) = \hat{{\mathcal P}}\left(\epsilon^*(\phi)\right) = \epsilon^*\left(\hat{A}(\phi) \right) = \epsilon^* \hat{A}\left(\chi^*(f) \right),$$
where $\hat{A} = \chi^* \hat{{\mathcal P}} \epsilon^*$.
\newline

The symbol of $\hat{{\mathcal P}}$ is defined to be ${\mathcal P}(p, \bar{q}) = \frac{\left(\tau_p, \hat{{\mathcal P}} \tau_q\right)_{\mathbb{H}}}{
\left(\tau_p, \tau_q\right)_{\mathbb{H}}}$

An easy calculation shows that $\chi^*(\tau_p) = \psi_{\zeta}$ is a coherent state on $\mathbb{D}$ parametrized by $\zeta = \epsilon(p)$.  In fact all coherent states on $\mathbb{D}$ are of this form. Let the symbol of $\hat{A}$ be denoted by $A(\zeta, \bar{\eta})$ where $\eta = \epsilon(q)$. One can  show that the symbol ${\mathcal P}(p, \bar{q}) = A(\zeta, \bar{\eta})$, i.e. the two symbols are in fact the same. 
\newline
\newline
{\bf The star product:} Let $\hat{{\mathcal P}}_1, \hat{{\mathcal P}}_2$ be two bounded linear operators acting on 
${\mathcal F}_h$ and $\hat{A}_1, \hat{A}_2$ be two bounded linear operators acting on $\chi^*({\mathcal F}_h)$.
Then it is easy to show that the star product $\left({\mathcal P}_1 * {\mathcal P}_2\right) (p,\bar{p})= (A_1 * A_2)(\zeta, \bar{\zeta})$ is the symbol for the composition operator $ \hat{{\mathcal P}}_1 \circ \hat{{\mathcal P}}_2$. 

It has been proved in ~\cite{perelomov}, chapter 16, that the correspondence principle holds for $A_1 * A_2$. 
Thus, $$\mathrm{lim}_{h \rightarrow 0} \left({\mathcal P}_1 {*} {\mathcal P}_2 \right) \left(p,\bar{p}\right) = {\mathcal P}_1 (p, \bar{p}) {\mathcal P}_2(p, \bar{p}).$$

Also,
$$\mathrm{lim}_{h \rightarrow 0} \frac{1}{h} \left(\left({\mathcal P}_1 {*} {\mathcal P}_2\right) -  \left({\mathcal P}_2 {*} {\mathcal P}_1 \right)\right)  = \{ A_1, A_2 \}_{\mathbb{D}} = \{ {\mathcal P}_1, {\mathcal P}_2 \}_{\mathbb{H}},$$
 where the Poisson brackets are defined as follows:
\begin{equation}
\begin{split}
\{ A_1, A_2 \}_{\mathbb{D}} &= (1-|z|^2)^2 \left( \frac{\partial A_1}{\partial \bar{z}} \frac{\partial A_2}{\partial z}- \frac{\partial A_2}{\partial \bar{z}}\frac{\partial A_1}{\partial z} \right)\\ 
&=4(\mathrm{Im}(w))^2 \left(\frac{\partial {\mathcal P}_1}{\partial \bar{w}} \frac{\partial {\mathcal P}_2}{\partial w}- \frac{\partial {\mathcal P}_2}{\partial \bar{w}}\frac{\partial {\mathcal P}_1}{\partial w} \right)\\
 &= \left\{{\mathcal P}_1 , {\mathcal P}_2 \right\}_{\mathbb{H}.}\\
\end{split}
\end{equation}
The last equality follows from the fact that $\left(1-|z|^2\right)^2 = \frac{16 \left(\mathrm{Im}(w)\right)^2}{\left(|w|^2 + 2 \mathrm{Im}(w) +1 \right)^2}$ and $ |\frac{\partial w}{\partial z}|^2 = \frac{1}{4} \left(|w|^2 + 2 \mathrm{Im}(w) +1 \right)^2$ where recall $ z= \frac{w-i}{w+i}$. 

\section{Acknowledgement}
R.D. acknowledges support from the Department of Atomic Energy,  Government of India,  under project RTI4001. She also acknowledges support from the grant CRG/2018/002835, Science and Engineering Research Board, Government of India,  
J.S. acknowledges support by a grant from the Simons Foundation (677895, R.G.).
R.S.V.  would like to acknowledge the LTVSP praogram in I.C.T.S.-T.I.F.R. where this project was completed. 
We thank an anonymous referee for critical comments which improved the
manuscript and Supurna Sinha for reading through the
final version.

%\bibliography{dec28}

\end{document}